\newtheorem{theorem}{Theorem}
\newtheorem{assumption}{Assumption} 
\begin{document}

\title{Multi-Server FL with Overlapping Clients: A Latency-Aware Relay Framework}

\author{
Yun Ji$^{1}$, Zeyu Chen$^{1}$, Xiaoxiong Zhong$^{2}$, Yanan Ma$^{3}$, Sheng Zhang$^{1,*}$, and Yuguang Fang$^{3}$\\[4pt]
$^{1}$Tsinghua Shenzhen International Graduate School, Tsinghua University, Shenzhen 518055, China\\
$^{2}$Department of New Networks, Peng Cheng Laboratory, Shenzhen 518000, China\\
$^{3}$Department of Computer Science, City University of Hong Kong, Kowloon, Hong Kong\\[4pt]
$^{*}$\textit{Corresponding author: Sheng Zhang, email: zhangsh@sz.tsinghua.edu.cn}
}

\maketitle

\begin{abstract}
Multi-server Federated Learning (FL) has emerged as a promising solution to mitigate communication bottlenecks of single-server FL. In a typical multi-server FL architecture, the regions covered by different edge servers (ESs) may overlap. Under this architecture, clients located in the overlapping areas can access edge models from multiple ESs. Building on this observation, we propose a cloud-free multi-server FL framework that leverages Overlapping Clients (OCs) as relays for inter-server model exchange while uploading the local updated model to ESs. This enables ES models to be relayed across multiple hops through neighboring ESs by OCs without introducing new communication links. We derive a new convergence upper bound for non-convex objectives under non-IID data and an arbitrary number of cells, which explicitly quantifies the impact of inter-server propagation depth on convergence error. Guided by this theoretical result, we formulate an optimization problem that aims to maximize dissemination range of each ES model among all ESs within a limited latency. To solve this problem, we develop a conflict-graph-based local search algorithm optimizing the routing strategy and scheduling the transmission times of individual ESs to its neighboring ESs. This enables ES models to be relayed across multiple hops through neighboring ESs by OCs, achieving the widest possible transmission coverage for each model without introducing new communication links. Extensive experimental results show remarkable performance gains of our scheme compared to existing methods.
\end{abstract}

\begin{IEEEkeywords}
Multi-server federated learning, Edge computing, Overlapping clients.
\end{IEEEkeywords}

\section{Introduction}
\section{Introduction}
\IEEEPARstart{F}{ederated} Learning (FL) has gained considerable attention due to its ability to enable collaborative model training while preserving data privacy \cite{a3}. A typical FL system operates in rounds, where clients perform local training based on their own datasets and then
upload model updates to a cloud server for aggregation and broadcasting in the next round of training. However, the limitation of this single-server architecture is the long-range communication latency between clients and the cloud server (CS), especially in large-scale FL systems. To address this issue, multi-server FL architectures have been proposed~\cite{a7,a31}. A prominent example is hierarchical federated learning (HFL)~\cite{a7}, where ESs are responsible for aggregating local model updates from their associated clients and subsequently uploading their aggregated models to the CS. However, frequent communications between ESs and the CS can still lead to high latency and resource consumption.

In \cite{a17} and \cite{a40}, a new FL architecture utilizing multiple servers is studied, which exploits the realistic deployment of 5G-and-beyond networks where a client can be located in the overlapping coverage areas of multiple servers, a situation that is particularly common in dense or mesh-like network deployments. The key idea is that clients download multiple models from all the edge servers they can access and train their local models based on the average of these models. Aggregating multiple edge models mainly enhances each OC’s own generalization. Once uploaded to the ES, its contribution to cell-level generalization is limited by the small data size of each client, thus, the global model performance decreases further as the number of OCs drops. To address these challenges, a few studies have begun exploring OC-assisted cross-cell model sharing~\cite{a18}. In \cite{a18}, OCs buffer models from multiple ESs and upload them together after completing local training in the next round, which increases shared information but causes stale updates and higher storage overhead. Building on these insights, we have proposed FedOC~\cite{a51}, where OCs serve as relay nodes that forward edge models between adjacent cells upon reception. Compared with \cite{a17} and \cite{a40}, every client can train on aggregated models from multiple edges in each round, which further enhances generalization and makes the model performance independent of the number of OCs. However, the theoretical results in~\cite{a51} indicate that, as the number of cells increases, relying solely on a limited number of neighboring ES models makes it difficult for each ES to achieve fast synchronization across the network. In this paper, we propose a new approach. Instead of immediately taking actions, witnin a limited time budget, each ES introduces an appropriate waiting period to aggregate its own model with models received from neighboring ESs before forwarding the aggregated result via OCs. Meanwhile, since OCs, as clients, are required to upload their locally trained models to ESs, we aggregate the relayed ES model with the locally uploaded model and forward the aggregated model to neighboring ESs in OCs. In this way, wider model dissemination among ESs can be achieved without introducing any additional communication links or overhead. Motivated by this idea, we derive an upper bound on the optimality gap of the loss function between the proposed distributed aggregation scheme and the optimal solution. Our theoretical analysis shows that the global model becomes closer to the optimal solution as each ES receives more models in the current round. However, waiting for more ES models to be relayed through particular OCs introduces additional latency at each ES, thereby leading to a fundamental tradeoff between model performance and communication latency. Within the framework of leveraging OCs for inter-ES model relaying, we dynamically optimize the model transmission timing of each ES to maximize the number of ES models received by each ES within a finite time period. Without introducing any additional communication links or requiring synchronization with a cloud center, this design promotes distributed model fusion.

In summary, we highlight our main contributions as follows:

\begin{itemize}
    % \item We propose a distributed aggregated multi-server FL framework, where we exploit the ROCs for model exchange between neighboring cells and let OCs participate in different cells in a latency-aware manner to reduce the model heterogeneity among different cells and accelerate model convergence. In this process, we merge model relaying with local model uploading of each ROC into a single transmission, and no additional communication links are added.
    
    \item We propose a distributed aggregated multi-server FL framework, where we exploit OCs as relays for model exchange between neighboring cells and facilitate cross-server collaboration. To reduce relay communication overhead, we assign a single relay OC per overlapping region and merge model relaying with local update uploading into a single transmission. We derive the upper bound of the gap between the expected and optimal global losses with respect to the propagation range of each ES model among all ESs. Based on this, we optimize the model relay strategy and transmission scheduling to neighboring ESs for each ES model, so as to maximize the propagation range of each ES model within a finite time without introducing new communication links.

    \item To solve the optimization problem, we design a conflict-graph-based local search algorithm that jointly optimizes transmission decisions and durations among ESs. By constructing a conflict graph to model transmission interference and applying a greedy–local search strategy, the algorithm maximizes the transmission depth of each ES model among all ESs in a cloud-free manner, thereby improving model propagation efficiency.

    \item We conduct extensive experiments on multiple datasets under varying numbers of cells and heterogeneous data distribution. The results demonstrate that our algorithm consistently outperforms the compared benchmarks in terms of both convergence and accuracy.
\end{itemize}

\section{System Model}
\subsection{Network Architecture}
The system consists of a set of \(L\) ESs indexed by \(\mathcal{L} = \{1,2,\dots, L\}\), where each ES serves a group of clients within its coverage area, and some clients fall within the overlapping regions of multiple ESs. We refer to the local coverage area of each ES as a \textit{cell}. Clients are categorized into two types: local clients (LCs), which are located in non-overlapping regions and communicate with only one ES, and overlapping clients (OCs), which are located in overlapping regions and can communicate with multiple ESs.

\begin{figure}[t]
  \centering
  \includegraphics[width=1\linewidth]{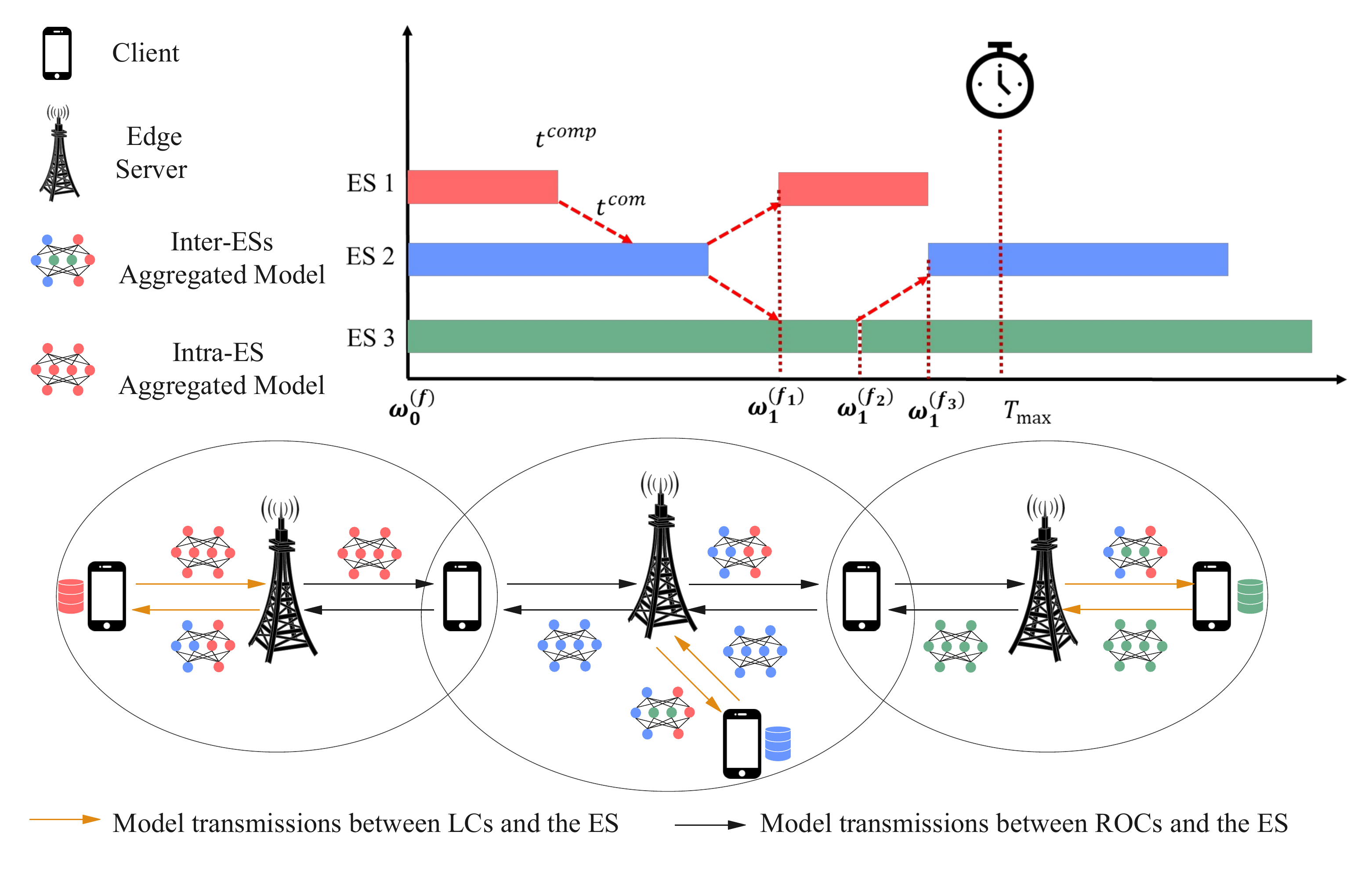}
  \caption{Illustration of our proposed algorithm.}
  \label{fig:system_model}
\vspace{-0.5cm}
\end{figure}

As shown in Fig.~\ref{fig:system_model}, we model overlaps using a chain topology, where each ES \(l\) overlaps with its immediate neighbors, ES \(l-1\) and \(l+1\), except for boundary ESs (ES 1 and ES \(L\)), which overlap with only one neighbor. The chain topology is adopted for two main reasons. First, it simplifies analysis while preserving the key challenges posed by overlapping coverage and multi-server coordination. Second, it can be readily extended to more complex multi-server topologies, e.g., dense mesh deployments where one ES overlaps with three or more neighboring ESs. When a single region is simultaneously overlapped by more than three cells, severe communication interference arises, and the available bandwidth for each cell becomes further fragmented, resulting in substantial communication delays. By limiting each overlapped region to at most two logically participating ESs per communication round, a chain-like logical topology transforms complex mesh structures into simple linear subgraphs, enabling efficient operation without requiring additional complex interference management or communication coordination mechanisms. A single client in each overlap region, denoted \(b_{l,l+1}\), serves as the Relay Overlapping Client (ROC) to forward models between ESs, while the rest are Normal OCs (NOCs). The commonly used notations are summarized in Table~\ref{tab:table0}.

% We formalize the client classification as follows: \(\mathcal{U}_l\) is the set of LCs in cell \(l\) with \(|\mathcal{U}_l| = U_l\); \(\mathcal{V}_{l,l+1}\) is the set of OCs between cells \(l\) and \(l+1\) with size \(V_{l,l+1} = |\mathcal{V}_{l,l+1}|\). A single client in each overlap region, denoted \(b_{l,l+1}\), serves as the Relay Overlapping Client (ROC) to forward models between ESs, while the rest are Normal OCs (NOCs), forming the set \(\mathcal{N}_{l,l+1}\). 

\begin{table}[t]
\centering
\caption{Frequently Used Nomenclature and Notations}  % 标题不加粗，首字母正常大小，其余字体较小 
\label{tab:table0}
\begin{tabularx}{\linewidth}{l|X}
\hline
\textbf{Notation} & \textbf{Description} \\
\hline
\(n^{(k)}\) & The number of data samples of client \(k\) \\
\hline
\( b_{l,l+1} \) & The relay overlapping client (ROC) in the overlapped area between cell \(l\) and \(l+1\) \\
\hline
\(R\) & The total number of rounds \\
\hline
\(\mathcal{S}_{l}\) & The set of clients transmitting local models to ES \(l\) \\
\hline
\(\tilde{N}_{r}^{(f_l)}\) & The total number of data volume clients in \(\mathcal{S}_{l}\) \\
\hline
\(N\) & The total number of data samples of all clients \\
\hline
\(\bm{\tilde{w}}_{r,E}^{(f_l)}\) & The initial ES model of cell \(l\) is obtained by aggregating the models of clients in \(\mathcal{S}_l\). \\
\hline
\( \bm{w}_{r,E}^{(l+1, l)} \) & The model transmitted from ES \(l+1\) to ES \(l\) by ROC \(b_{l+1,l}\)\\
\hline
\( \bm{w}_{r+1}^{(f_l)} \) & The ES model of cell \( l \) for broadcasting to clients at the start of round \( r+1 \) \\
\hline
\end{tabularx}
\vspace{-0.5cm}
\end{table}

\subsection{Algorithm Description}

Now we describe our FL algorithm tailored to the above setup. As shown in Fig.~\ref{fig:system_model}, the training process can be summarized as following stages, i.e., 1) local model update and upload. 2) intra-ES model aggregation. 3) inter-cell model exchange through ROC relays. 4) ES model aggregation and broadcasting.

At the beginning of each round, clients download the edge models from ESs and update their local models, with operations varying depending on client roles: 

For each LC \(k\), local training is performed for \(E\) iterations using stochastic gradient descent (SGD) on the received edge model. The update in the \(e\)-th iteration is computed as

\begin{align} 
\label{equ1}
\bm{w}_{r,e+1}^{(k)} = \bm{w}_{r,e}^{(k)} - \eta_{r,e} \nabla \ell_k (\bm{w}_{r,e}^{(k)}),
\end{align}
where \(e = 0, 1, \ldots, E-1\), and \( \nabla \ell_k (\bm{w}_{r,e}^{(k)})\) is the gradient of the local loss function. After local updating, each LC \(k\) transmits the local model \(\bm{w}_{r,E}^{(k)}\) to corresponding ES. 

For OCs, an overlapping client \(k \in \mathcal{V}_{l,l+1}\) could receive two models: \(\bm{w}_r^{(f_l)}\) and \(\bm{w}_r^{(f_{l+1})}\). To reduce the wait latency, we assign each OC to the nearest ES as its participating server for training. Once it receives the model disseminated by that ES, it performs local training as the same as LCs based on the received model. Finally, each NOC transmits \(\bm{w}_{r,E}^{(k)}\) to selected ES \(i\), ROCs keep their updated models locally and upload them during the subsequent phase. Each ES \(l\) aggregates the received local models as 
\begin{align} 
\label{equ2}
\bm{\tilde{w}}_{r,E}^{(f_l)} = \frac{\sum_{k \in \mathcal{S}_r^{(l)}} n^{(k)} \boldsymbol{w}_{r,E}^{(k)}}{\tilde{N}_{r}^{(f_l)}},
\end{align}
where \(\tilde{N}_{r}^{(f_l)} = \sum_{k \in \mathcal{S}_r^{(l)}} n^{(k)}\), and \(\mathcal{S}_r^{(l)}\) consists of LCs and the NOCs that upload local models to ES \(l\). 

In the stage of inter-cell model exchange, since no additional communication links are introduced, each ES \(l\) can communicate with its ROCs at most once in each direction to relay the model to other ESs. Each ES first decides whether to transmit its model to neighboring ESs, and then determines whether to wait for the models received from its neighbors to be aggregated before forwarding them. In this way, the transmission timing to different neighbors is optimized to achieve the widest possible model propagation within a finite time horizon. The above routing strategy will be elaborated in the section IV. The relay rule of each ROC is: once ROC \( b_{l,l+1} \) receives an edge model from either ES \( l \) or ES \( l+1 \), it will first aggregate this model with its own local model \(\bm{w}_{r,E}^{(b_{l,l+1})}\) trained during the model update stage, and then forward it to the other ES, and we denote the model that ROC \( b_{l, l+1} \) transmits to ES \( l \) as \(\bm{w}_{r,E}^{(l+1, l)}\)

% \begin{align} 
% \label{equ6}
% \bm{w}_{r,E}^{(l+1, l)} = \frac{\tilde{N}_{r}^{(f_{l+1})} \bm{\tilde{w}}_{r,E}^{(f_{l+1})} + n^{(b_{l,l+1})} \bm{w}_{r,E}^{(b_{l,l+1})}}{N_{r}^{(l+1,l)}},
% \end{align}
% where \(N_{r}^{(l+1,l)} = \tilde{N}_{r}^{(f_{l+1})} + n^{(b_{l,l+1})}\). 

For each ES \(l\), after receiving the models and corresponding data volume from ROCs \(b_{l,l+1} \text{ and } b_{l-1,l}\), ES \(l\) will aggregate these models with its intra-cell model to update edge model \(\bm{w}_{r+1}^{(f_l)}\) for broadcast in \(r+1\)-th iteration, which can be written as 

\begin{align} 
\label{equ3}
\bm {w}_{r+1}^{(f_l)}
= \frac{N_{r}^{(l+1,l)} \bm{w}_{r,E}^{(l+1, l)} + N_{r}^{(l-1,l)} \bm{w}_{r,E}^{(l-1, l)} + \tilde{N}_{r}^{(f_l)} \bm{\tilde{w}}_{r,E}^{(f_l)}}{N_{r}^{(l+1,l)} + N_{r}^{(l-1,l)} + \tilde{N}_{r}^{(f_l)}}.
\end{align}
For boundary cells, the missing neighbor’s model is taken as zero. so \eqref{equ3} holds for all \(l\in\mathcal{L}\).

Noted that \(\bm{w}_{r,E}^{(l-1,l)}\) may contain not only the locally aggregated model of cell \(l-1\) but also models relayed from other cells through multi-hop transmissions. For instance, the model of cell \(i\) can be forwarded along the path \(\mathcal{P}_{i \rightarrow (l-1)} = \{\text{ES } i, \text{ROC } b_{i,i+1}, \text{ES } (i+1), \dots, \text{ROC } b_{l-2,l-1}, \text{ES } (l-1)\}\), where each relay node aggregates the received model with its own before forwarding, provided that the timing constraint is satisfied. We denote \(\bm{p}_{r}^{(f_l)} = [p_{r}^{(1,l)}, \ldots, p_{r}^{(L,l)}]\), where \(p_{r}^{(i,l)} \in \{0,1\}\) indicates whether cell \(i\)’s model participates in ES \(l\)’s aggregation. Based on~\eqref{equ3}, we have:

\begin{align} 
\label{equ4}
\bm{{w}}_{r+1}^{(f_l)} 
&= \frac{1}{N_{r,E}^{(f_l)}} \bigg[\sum_{i=1}^{l-1} p_{r}^{(i,l)}\left( \tilde{N}_r^{(f_i)} \bm{\tilde{w}}_{r,E}^{(f_i)} + n^{(b_{i,i+1})} \bm{w}_{r,E}^{(b_{i,i+1})}\right) \notag\\
&\quad + p_{r}^{(l,l)} \tilde{N}_{r}^{(f_l)} \bm{\tilde{w}}_{r,E}^{(f_l)} \notag\\
&\quad + \sum_{i=l+1}^{L} p_{r}^{(i,l)}\left( \tilde{N}_r^{(f_i)} \bm{\tilde{w}}_{r,E}^{(f_i)}+ n^{(b_{i-1,i})} \bm{w}_{r,E}^{(b_{i-1,i})} \right) \bigg] \notag\\
&= \sum_{i=1}^{L}\frac{p_{r}^{(i,l)}\hat{N}_{r}^{(f_i)} \bm{\hat{w}}_{r,E}^{(f_i)}}{\sum_{i=1}^{L} p_{r}^{(i,l)}\hat{N}_{r}^{(f_i)}},
\end{align}
where \(\bm{\hat{w}}_{r,E}^{(f_i)}\) is denoted as

\begin{align}
\label{equ5}
\bm{\hat{w}}_{r,E}^{(f_i)} = \sum_{k\in \mathcal{\hat{K}}_{r}^{(f_i)}}\frac{n^{(k)}}{\hat{N}_{r}^{(f_i)}} \bm{w}_{r,E}^{(k)},
\end{align}
where \(\hat{N}_{r}^{(f_i)} = \sum_{k\in \mathcal{\hat{K}}_{r}^{(f_i)}} n^{(k)}\), and the client set \(\hat{\mathcal{K}}_{r}^{(f_i)}\) satisfies:

\begin{align}
\label{equ6}
\mathcal{\hat{K}}_{r}^{(f_i)} = 
\begin{cases} 
\mathcal{S}_{i} \cup b_{i,i+1}, & 1 \leq i < l \\
\mathcal{S}_{i}, & i = l \\
\mathcal{S}_{i} \cup b_{i-1,i}, & l+1 \leq i \leq L. 
\end{cases}
\end{align}

\subsection{Latency Analysis}
Several key event timings are critical to our algorithm:
\begin{itemize}
    \item \textbf{Broadcast time (\(t_{l}^{\textnormal{cast}}\))}: Time for ES \(l\) to broadcast its model to its covered clients.
    \item \textbf{Cell update time (\(t_{l}^{\textnormal{comp}}\))}: Time for all clients in a cell to complete local computations and uploading model updates for aggregation.
    \item \textbf{Relay start time (\(t_{l,j}^{\textnormal{start}}\))}: Time when ES \(l\) begins  transmitting its local aggregated ES model to a neighboring cell \(j \in \{l-1, l+1\}\) through the ROC \(b_{l,j}\).
    \item \textbf{Communication transmission time (\(t_{l,j}^{\textnormal{com}}\))}: Time taken for ES \(l\) to transmit its model to a neighboring cell \(j\) through the ROC \(b_{l,j}\).
    \item \textbf{Model aggregation time (\(t_{l}^{\textnormal{agg}}\))}: Time when ES \(l\) final aggregates all the received ES models.
\end{itemize}

Assume a total spectrum budget of \(B\) for model transmission. Adjacent cells operate on disjoint sub-bands, each allocated \(B/2\) bandwidth. Each ES broadcasts the model to its clients and uniformly allocates the available bandwidth among them for local model uploads. After aggregation, the ES reclaims the bandwidth and forwards the model to neighboring ESs via ROCs. The communication latency can be given by

\begin{align}
\label{equ7}
t_{l,l+1}^{\text{com}} = \frac{M}{\frac{B}{4}\left[ \log\left(1+\frac{4\delta_{l,l+1}P}{BN_0}\right) + \log\left(1+\frac{4\delta_{l,l+1}p}{BN_0}\right) \right]},
\end{align}
where \(M\) is the model size, and \(\delta_{l,l+1}\) represents the channel gain from ES \(l\) to \(l+1\), with \(P\) and \(p\) being the transmission powers of the ES and client, respectively. \(N_0\) is the noise power spectral density. The relay start time is constrained by

\begin{align} 
\label{equ8}
t_{l,j}^{\text{start}} \geq t_{l}^{\textnormal{cast}} + t_{l}^{\textnormal{comp}}, \quad \forall j \in \{\,l-1,\,l+1\,\}.
\end{align}

% For each cell \(l\), we define a binary vector \(\bm{p}^{(f_l)}\) to represent whether each cell participates in the aggregation of cell \(l\)'s model:

% \begin{equation} 
% \label{deqn_problem9a}
% p^{(l,l)} = 1, \quad \forall 1 \leq l \leq L.
% \end{equation}

The aggregation time \(t_{l}^{\textnormal{agg}}\) is determined by:

\begin{align} 
\label{equ9}
t_{l}^{\textnormal{agg}} = \max \left\{ p^{(j,l)} \big( t_{j,l}^{\textnormal{start}} + t_{j,l}^{\textnormal{com}} \big) \;\middle|\; j \in \{l-1, l, l+1\} \right\}.
\end{align}

\section{THEORETICAL RESULTS}
In this paper, we consider a \(C\)-class classification problem over a compact feature space \(\mathcal{X}\) with label space \(\mathcal{Y}=\{1,2,\ldots,C\}\). Following \cite{a34}, we define the population loss under the cross-entropy as \(\ell(\bm{w})=\sum_{i=1}^C P_{y=i}\, \mathbb{E}_{\bm{x}\mid y=i}\big[\log f_i(\bm{x},\bm{w})\big]\), where \(P_{y=i}\) denotes the probability of class \(i\), and \(f_i\) is the predicted probability of class \(i\) parameterized by the model weights \cite{a34}.
% The local update in (???)(???)(???)(???)(???)(???)(???)(???)(???)(???)(???)(???)(???)(???)(???)(???)(???)(???)(???)(???)(???)(???)(???)(???)(???)(???)(???)(???)(???)(???)(???)(???)(???)(???)(???)(???)(???)\eqref{equ3} can thus be written as
% \begin{align}
% \label{equ10}
% \bm{w}_{r,e+1}^{(k)} 
% = \bm{w}_{r,e}^{(k)} - \eta_{r,e} \sum_{i=1}^C P_{y=i}^{(k)} 
% \nabla_{\bm{w}} \mathbb{E}_{\bm{x}\mid y=i}\!\left[\log f_i(\bm{x},\bm{w}_{r,e}^{(k)})\right],
% \end{align}
% where \(P_{y=i}^{(k)}\) is the class proportion on client \(k\), and 

Now we introduce the widely used assumptions in FL:
\begin{assumption}
\label{assumption1} \(\nabla_{\boldsymbol{w}} \mathbb{E}_{\boldsymbol{x} \mid y=i} \log f_i(\boldsymbol{x}, \boldsymbol{w})\) is \(\lambda_{\boldsymbol{x} \mid y=i}\)-Lipschitz for each class \(i \in [C]\), i.e., for all \(\bm{v}\) and \(\bm{w}\), \(\|\nabla_{\boldsymbol{w}} \mathbb{E}_{\boldsymbol{x} \mid y=i} \log f_i(\boldsymbol{x}, \boldsymbol{w}) - \nabla_{\boldsymbol{w}} \mathbb{E}_{\boldsymbol{x} \mid y=i} \log f_i(\boldsymbol{x}, \boldsymbol{v}) \| \leq \lambda_{\boldsymbol{x} \mid y=i}\|\bm{w}-\bm{v}\|\).
\end{assumption}

\begin{theorem}
\label{theorem1}
%这里的\eta有点问题
Let Assumption~\ref{assumption1} holds, assume \(\sum_{i=1}^{C}P_{y=i}\lambda_{x|y=i} = \lambda\), and set \(\bm{w}^{*}\) to the optimal global model. If the learning rate satisfies \(\eta_{r,e} = \frac{1}{(r+1)(E-1)}~\forall~r,~e\), then we have
\begin{align} 
\label{equ10}
& \ell(\bm{w}_{R}^{(f_l)}) - \ell(\bm{w^{*}}) \notag\\
\leq & \frac{\lambda}{2} \Bigg[ \sum_{j=1}^{L} 
D_{R-1}^{(j)}  \left\|\boldsymbol{w}_{R-1}^{(f_j)}-\boldsymbol{w}_{R-1}^{(c)}\right\| + \underbrace{\frac{\beta_{R-1}}{R}}_{\epsilon^{\text{intra}}} \notag\\
&+ \underbrace{\frac{\sum_{j=1}^{3} \hat{N}_{R-1}^{(f_j)} H^{(j)}\left( \sum_{i=1}^{C} \left\| P_{y=i}^{(c_j)} - P_{y=i}^{(c)} \right\|\right) }{NR(E-1)} + F_{R-1}^{(l)}}_{\epsilon^{\text{inter}}} \Bigg],
\end{align} 
where \( 
F_{R-1}^{(l)} = 
\sum_{j=1}^{L}\left|\frac{p_{R-1}^{(j,q)}\hat{N}_{R-1}^{(f_j)}}
{\sum_{j=1}^{L} p_{R-1}^{(j,q)}\hat{N}_{R-1}^{(f_j)}} 
- 
\frac{\hat{N}_{R-1}^{(f_j)}}
{\sum_{j=1}^{L} \hat{N}_{R-1}^{(f_j)}}\right|
\!\left\|\bm{\hat{w}}_{R-1,E}^{(f_j)}\right\|,\)
and \(
\beta_{R-1} = 
\frac{E-1}{N}\sum_{j=1}^{L}
\sum_{k\in\hat{\mathcal{K}}_{r}^{(f_j)}}n^{(k)}
\Bigl(\sum_{i=1}^C|P_{y=i}^{(k)}-P_{y=i}^{(c_j)}|\Bigr)
\!\left(\prod_{e=1}^{E-1}a_{R-1,e}^{(k)}\right)
g_{\max}\!\bigl(\boldsymbol{w}_{R-1}^{(c_j)}\bigr).
\)
\end{theorem}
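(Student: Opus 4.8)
The plan is to introduce a virtual centralized model \(\bm{w}_R^{(c)}\)---the iterate that centralized SGD on the aggregate label distribution would produce after the same number of steps---to control the weight divergence between the relayed edge model \(\bm{w}_R^{(f_l)}\) and \(\bm{w}_R^{(c)}\), and then to convert that divergence into the loss gap using the smoothness that Assumption~\ref{assumption1} supplies. First I would verify that \(\ell\) is \(\lambda\)-smooth: since \(\nabla\ell(\bm{w})=\sum_{i=1}^{C}P_{y=i}\,\nabla_{\bm{w}}\mathbb{E}_{\bm{x}\mid y=i}[\log f_i(\bm{x},\bm{w})]\) is a convex combination of the per-class gradients, each \(\lambda_{\bm{x}\mid y=i}\)-Lipschitz by Assumption~\ref{assumption1}, the triangle inequality yields that \(\nabla\ell\) is \(\lambda\)-Lipschitz with \(\lambda=\sum_i P_{y=i}\lambda_{\bm{x}\mid y=i}\). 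Smoothness together with the optimality condition \(\nabla\ell(\bm{w}^{*})=0\) then reduces the loss gap to controlling the weight distance of \(\bm{w}_R^{(f_l)}\) from the optimum (up to the prefactor \(\tfrac{\lambda}{2}\)), so the remainder of the argument is purely about that distance.

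The core of the proof is to unroll the aggregation rule \eqref{equ4}--\eqref{equ6} together with the \(E\) local SGD steps \eqref{equ1}, and to read off three disjoint error sources, each measured against \(\bm{w}_R^{(c)}\). The \emph{intra-cell drift}: iterating \eqref{equ1} for \(E-1\) steps and comparing each client's trajectory to the centralized one gives a per-client term proportional to the label mismatch \(\sum_i|P_{y=i}^{(k)}-P_{y=i}^{(c_j)}|\), amplified by the accumulated per-step factors \(\prod_{e}a_{R-1,e}^{(k)}\) and the gradient magnitude \(g_{\max}(\bm{w}_{R-1}^{(c_j)})\); summing over clients with weights \(n^{(k)}/N\) reproduces \(\beta_{R-1}\), and the diminishing step size \(\eta_{r,e}=\tfrac{1}{(r+1)(E-1)}\) supplies the \(1/R\) factor, yielding \(\epsilon^{\text{intra}}\). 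The \emph{cell-to-global mismatch}: replacing each cell distribution \(c_j\) by the global distribution \(c\) costs \(\sum_i\|P_{y=i}^{(c_j)}-P_{y=i}^{(c)}\|\) weighted by \(\hat{N}_{R-1}^{(f_j)}H^{(j)}\), producing the first summand of \(\epsilon^{\text{inter}}\). The \emph{incomplete-propagation error}: the realized aggregation coefficients \(\tfrac{p_{R-1}^{(j,q)}\hat N_{R-1}^{(f_j)}}{\sum_j p_{R-1}^{(j,q)}\hat N_{R-1}^{(f_j)}}\) differ from the ideal all-model coefficients \(\tfrac{\hat N_{R-1}^{(f_j)}}{\sum_j \hat N_{R-1}^{(f_j)}}\) whenever some routing indicators \(p_{R-1}^{(j,q)}\) vanish, and this gap, weighted by \(\|\hat{\bm{w}}_{R-1,E}^{(f_j)}\|\), is exactly \(F_{R-1}^{(l)}\). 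Everything left after extracting these three pieces is the previous-round divergence, which I would collect into the recursive term \(\sum_j D_{R-1}^{(j)}\|\bm{w}_{R-1}^{(f_j)}-\bm{w}_{R-1}^{(c)}\|\), with coefficients \(D_{R-1}^{(j)}\) fixed by the aggregation weights.

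The main obstacle will be this last step. The relayed model \(\bm{w}_{r,E}^{(l-1,l)}\) carries, through the multi-hop path \(\mathcal{P}_{i\to(l-1)}\), the accumulated intra-cell drift and distribution mismatch of every upstream cell, so the unrolling of \eqref{equ4} must attribute each propagated contribution to the correct cell while keeping the normalizing sums \(\hat N_{r}^{(f_i)}\) consistent across the three boundary cases of \eqref{equ6} (including the zero-padding convention for boundary cells noted after \eqref{equ3}). Isolating the pure routing effect into the single term \(F_{R-1}^{(l)}\)---so that the heterogeneity terms and the propagation term remain uncontaminated by one another---is where the bookkeeping is most delicate, and it is the step I expect to demand the most care.
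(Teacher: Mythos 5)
Your proposal follows essentially the same route as the paper's own proof: you introduce the same cell-centralized SGD reference \(\bm{w}_R^{(c)}\), convert the loss gap via \(\lambda\)-smoothness and a triangle-inequality split, extract the routing term \(F_{R-1}^{(l)}\) as the mismatch between the realized coefficients \(p_{R-1}^{(j,l)}\hat{N}_{R-1}^{(f_j)}/\sum_j p_{R-1}^{(j,l)}\hat{N}_{R-1}^{(f_j)}\) and the ideal ones, run the same per-client drift recursion over the \(E\) local steps to obtain \(D_{R-1}^{(j)}\) and \(\beta_{R-1}/R\) (with the step size supplying the \(1/R\)), and treat the cell-to-global mismatch as the divergence of cell-centralized FedAvg from \(\bm{w}^{*}\). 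The only minor difference is that you plan to rederive this last bound by swapping \(P_{y=i}^{(c_j)}\) for \(P_{y=i}^{(c)}\), whereas the paper imports it directly from~\cite{a34}.
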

\begin{proof}
Please refer to the Appendix.
\end{proof}
\(\epsilon^{\text{intra}}\) quantifies the intra-cell heterogeneity, mainly reflected by the divergence between each client’s local data distribution and that of the entire cell, i.e., \(\sum_{i=1}^{C}\left| P_{y=i}^{(k)} - P_{y=i}^{(c_j)} \right|\) for \(k \in \hat{\mathcal{K}}_{r}^{(f_j)}\), \(\forall j\). This term vanishes as \(R \to \infty\). \(\epsilon^{\text{inter}}\) captures the inter-cell heterogeneity across different cells and consists of two parts. The first part corresponds to the divergence between each cell’s distribution and that of the entire system, i.e., \(\sum_{i=1}^{C} \left\| P_{y=i}^{(c_j)} - P_{y=i}^{(c)} \right\|\), which converges to zero as \(R \to \infty\). The second part, \(F_{R-1}^{(l)}\), arises from the model aggregation procedure in our algorithm. As the number of external cell models incorporated into \(\bm{w}_{r}^{(f_l)}\) increases, i.e., more elements in \(\bm{p}_{r}^{(f_l)} = [p_{r}^{(1,l)}, \ldots, p_{r}^{(L,l)}]\) equals to 1, \(F_{r}^{(l)}\) approaches zero. In the extreme case where every ES model \(\bm{w}_{r}^{(f_l)}\) aggregates all ES models at each round, i.e., \(\bm{p}_{r}^{(f_l)},~\forall~l\) is an all-one vector, the process degenerates into centralized FL with full cloud aggregation, and thus \(F_{r}^{(l)} = 0\). To sum up, the loss error is significantly influenced by the probability of each cell participating in the aggregation of other cell models in each round.
% Notably, although the proposed bounds set a relatively strict range for the learning rate (lr), experiments have demonstrated that the algorithm converges even when using commonly used learning rates.
\section{Problem Formulation and Solution}
Based on \textbf{Theorem~1}, we need to minimize \(F_{r}^{(l)}\) for all \(l\), which is equivalent to maximizing the propagation range of each ES model within a limited time. This is achieved by optimizing the routing decisions and the relay start times of each ES, enabling broader model sharing among cells.

The transmission path from cell \(j\) to \(l\) is defined as 
\(\text{path} = \{\text{ES } j, \text{ES } j+1, \dots, \text{ES } l\}\).
Successful transmission requires: 
(1) a valid link between adjacent nodes, i.e., \(p^{(q,q+1)}=1\),
(2) each model arriving before the next forwarding step, i.e., 
\(t_{q,q+1}^{\text{start}} + t_{q,q+1}^{\text{com}} \leq t_{q+1,q+2}^{\text{start}}\).

We define the indicator variable:
\begin{align} 
\label{equ11}
s^{(q,q+1)} = \mathds{1}\Big\{ t_{q,q+1}^{\text{start}} + t_{q,q+1}^{\text{com}} \leq t_{q+1,q+2}^{\text{start}} \Big\}.
\end{align}

Then, we have
\begin{align} 
\label{equ12}
p^{(j,l)} = p^{(l-1,l)} \prod_{q=j}^{l-2} \big(s^{(q,q+1)} p^{(q,q+1)}\big),~j \leq l-1,
\end{align}
\begin{align} 
\label{equ13}
p^{(j,l)} = p^{(l+1,l)} \prod_{q=j}^{l+2} \big(s^{(q,q-1)} p^{(q,q-1)}\big),~l+1 \leq j \leq L.
\end{align}

We denote the optimization problem as
\begin{align}
\label{equ14}
\textbf{P1:}\quad 
& \min_{\bm{t}^{\text{start}}, \bm{p}} 
&& \sum_{l=1}^{L} \sum_{j=1}^{L} 
\left| 
\frac{p^{(j,l)} \hat{N}^{(f_j)}}{\sum_{j=1}^L p^{(j,l)} \hat{N}^{(f_j)}} 
- 
\frac{\hat{N}^{(f_j)}}{\sum_{j=1}^L \hat{N}^{(f_j)}} 
\right| \nonumber \\
& \text{s.t.} 
&& \eqref{equ8},~\eqref{equ9},~\eqref{equ11}-\eqref{equ13}, \nonumber \\
& 
&& t_{l}^{\text{agg}} \le T_{\text{max}}, \quad \forall l \in \mathcal{L}.
\end{align}

where \(\bm{t}^{\text{start}}=\{t_{l,j}^{\textnormal{start}}\},~\forall~l,j\), \(\bm{p}=\{\bm{p}^{(f_l)}\},~\forall~l\) and \(T_{\text{max}}\) denotes the maximum duration of each round.

Since the transmission processes in opposite directions (toward left and right ESs) are independent in the constraints, the variables for each direction can be optimized separately. After removing the absolute value, \textbf{P1} can be reformulated to 

% \begin{equation} 
% \label{deqn_problem16a}
% \begin{aligned}
% \textbf{P2:} &\max_{\bm{t}_{\text{right}}^{\text{start}}, \bm{p}_{\text{right}}} && \sum_{l=1}^{L} \sum_{j=1}^{l}\left( p^{(j,l)} - \mathds{1}{ \left\{p^{(j,l)} = 0 \right\}} \right) \hat{N}^{(f_j)} + \\
% & \max_{\bm{t}_{\text{left}}^{\text{start}}, \bm{p}_{\text{left}}} && \sum_{l=1}^{L} \sum_{j=l+1}^{L}\left( p^{(j,l)} - \mathds{1}{ \left\{p^{(j,l)} = 0 \right\}} \right) \hat{N}^{(f_j)} \\
% & \quad s.t. && (20)-(22), (24)-(30),\\
% & && t_{l}^{\text{agg}} \leq T_{\text{max}}, \forall{l \in \mathcal{L}}
% \end{aligned}
% \end{equation}
\begin{align}
\label{equ15}
\textbf{P2:}\quad 
& \max_{\bm{t}_{\text{right}}^{\text{start}},\, \bm{p}_{\text{right}}}
&& \sum_{l=1}^{L} \sum_{j=1}^{l}
\left( p^{(j,l)} - \mathds{1}\!\left\{p^{(j,l)} = 0\right\} \right) \hat{N}^{(f_j)} \nonumber \\
& + \max_{\bm{t}_{\text{left}}^{\text{start}},\, \bm{p}_{\text{left}}}
&& \sum_{l=1}^{L} \sum_{j=l+1}^{L}
\left( p^{(j,l)} - \mathds{1}\!\left\{p^{(j,l)} = 0\right\} \right) \hat{N}^{(f_j)} \nonumber \\
& \text{s.t.} 
&& \eqref{equ8},~\eqref{equ9},~\eqref{equ11}-\eqref{equ13}, \nonumber \\
& 
&& t_{l}^{\text{agg}} \le T_{\text{max}}, \quad \forall l \in \mathcal{L}.
\end{align}
where 
\(\bm{t}_{\text{right}}^{\text{start}} = \{t_{l,j}^{\textnormal{start}}\}\), 
\(\bm{p}_{\text{right}} = \{p^{(l,j)}\}\), \(\forall~l \le j\), \(\bm{t}_{\text{left}}^{\text{start}} = \{t_{l,j}^{\textnormal{start}}\}\),  
\(\bm{p}_{\text{left}} = \{p^{(l,j)}\}\), \(\forall~l > j\).
We illustrate the algorithm using the first problem as an example. Firstly \(\{ p^{(j,j+1)} \}\) for \(1 \leq j \leq L-1\) can be determined straightforwardly. If \(t_{j,j+1}^{\text{local}} + t_{j,j+1}^{\text{com}} \leq T_{\text{max}}\), then \(p^{(j,j+1)} = 1\). This ensures that ES \(j+1\) at least can receive the model from ES \(j\). If \(p^{(j,l)}=1\), all intermediate relays from ES \(j\) to ES \(l\) must also be active. Thus, maximizing the objective is equivalent to finding the longest feasible relay paths. To achieve this, each ES \(j\) transmits its cell model to ES \(j+1\) once intra-cell aggregation is completed. 
After receiving the model, ES \(j+1\) forwards the aggregated model at 
\(\max \{t_{j,j+1}^{\text{start}} + t_{j,j+1}^{\text{com}},~t_{j+1}^{\text{cast}} + t_{j+1}^{\text{comp}} \}\), 
ensuring \(s^{(j,j+1)} = 1\) as defined in~\eqref{equ11}. 
This process continues until \(s_{q,q+1} = 0\), where no feasible \(t_{q,q+1}^{\text{start}}\) satisfies 
\(t_{q,q+1}^{\text{start}} + t_{q,q+1}^{\text{com}} \le T_{\text{max}}\). 
The resulting longest path from ES \(j\) is \(\{j, j+1, \ldots, q\}\). 
Accordingly, the optimal transmission path for each ES \(l\) is 
\(\mathcal{P}_{\text{right}}^{(q,l)} = \{q, q+1, \ldots, l\}\), 
where \(q\) is the earliest ES that can successfully transmit its model to ES \(l\). Conflicts arise when two paths share intermediate nodes. We model this as a conflict graph \(\mathcal{G}(\mathcal{V},\mathcal{E})\), where vertices represent paths weighted by the total data volume of cells along this path, which we denote as \(D^{(q,l)}\) for path \(\mathcal{P}^{(q,l)}\), and edges denote conflicts. The problem is then equivalent to selecting a maximum-weight independent set. For small networks, exhaustive search provides the optimum. For larger networks, we propose a Conflict-Graph-Based Local Search Algorithm. The algorithm consists of three steps as follows.

\textbf{Step 1.} Generate an initial independent set via a greedy selection of non-conflicting high-weight vertices. 

\textbf{Step 2.} Sort all paths by their end node numbers in ascending order, and find the sub-optimal and non-conflicted path for each node between the end node of one path and the start node of the next path. Then adding these paths into \(\mathcal{C}  \left( \mathcal{I}^{\text{ini}} \right) \) which we named as the full paths set of \(\mathcal{I}^{\text{ini}}\), and calculate the objective of \textbf{P2}, which we denote as \(U \left( \mathcal{I}^{\text{ini}} \right)\). 

\textbf{Step 3.} Iteratively refine the set by local search update higher \(U \left( \mathcal{I}^{\text{final}} \right)\). 

The detailed algorithm is summarized in \textbf{Algorithm~\ref{alg:conflict}}.

\begin{algorithm}[t]
\caption{Conflict-Graph-Based Local Search}
\label{alg:conflict}
\textbf{Input:} Routing decision \(\bm{p}_{\text{right}}\) (or \(\bm{p}_{\text{left}}\)), 
paths \(\mathcal{P}^{(q,l)}\) with weights \(D^{(q,l)}\). \\
\textbf{Output:} Relay start times \(\bm{t}_{\text{right}}^{\text{start}}\) (or \(\bm{t}_{\text{left}}^{\text{start}}\)).
\begin{algorithmic}[1]
\State Initialize independent set \(\mathcal{I}^{\text{ini}}\) greedily by using \textbf{Step 1}.
\State Compute objective \(U(\mathcal{I}^{\text{ini}})\).
\For {each \(i \in \mathcal{I}^{\text{ini}}\)}
    \State Replace \(i\) by feasible \(j\) to form \(\mathcal{I}^{\text{new}}\), s.t. \(\{j, \mathcal{I} \setminus i\}\) still remain independent.
    \State Calculate \(\mathcal{C}  \left( \mathcal{I}^{\text{new}} \right) \) and \(U \left(\mathcal{I}^{\text{new}}\right)\) by using \textbf{step 2}.
    \If { \(U(\mathcal{I}^{\text{new}}) > U(\mathcal{I}^{\text{final}})\) }
        \State Update \(\mathcal{I}^{\text{final}} = \mathcal{I}^{\text{new}}\).
    \EndIf
\EndFor
\State Calculate \(\mathcal{C}  \left( \mathcal{I}^{\text{final}} \right) \) and \(U\left(\mathcal{I}^{\text{final}}\right)\) by using \textbf{step 2}.
\For {each node in path \(\mathcal{P}^{(q,l)} \in \mathcal{C}(\mathcal{I}^{\text{final}})\)}
    \State \(t_{j,j+1}^{\text{start}} = \max(t_{j-1,j}^{\text{start}}+t_{j-1,j}^{\text{com}},~ t_{j}^{\text{cast}}+t_{j}^{\text{comp}})\).
\EndFor
\State \Return \(\bm{t}_{\text{right}}^{\text{start}}\) (or \(\bm{t}_{\text{left}}^{\text{start}}\)).
\end{algorithmic}
\end{algorithm}

\section{Performance Evaluation}
\subsection{Simulation Setup}
We evaluate a multi-server network with \(L\) ESs and \(K=60\) clients, each ES covering a circular area of radius 600m, where lients are randomly distributed. The wireless channel model includes Rayleigh small-scale fading and large-scale path-loss (\(128.1+37.6\log_{10}(d(\text{km}))\)). Experiments are conducted on MNIST and CIFAR-10 with heterogeneous data distributions: each client has samples from two classes, and each ES is restricted to five classes, creating strong imbalance. The maximum per-round time \(T_{\text{max}}\) is aligned with FedOC, i.e., the maximum time for  for fair comparison. For the MNIST dataset, we adopt a lightweight Convolutional Neural Network (CNN) with 21840 model parameters~\cite{a7}. For CIFAR-10, we employ a deeper six-layer CNN with 1.14 million parameters~\cite{a31}. Key parameters are summarized in Table~\ref{tab1}.

% The noise power and device power are assume to be \(-174 \text{dBm/Hz}\) and \(p=1\text{W}\), respectively. The transmit power of each ES is set to \(5\text{W}\) and the total bandwidth is \(50\text{MHz}\). We set the latency of each client’s training time per epoch as 0.1–0.2s for MNIST and 1–2s for CIFAR-10 \cite{a31}. 

 % For the MNIST dataset, we adopt a lightweight Convolutional Neural Network (CNN)  with 21,840 parameters, following the design proposed in \cite{a7}. For CIFAR-10, we employ a deeper six-layer CNN with 1.14 million parameters, as described in \cite{a31}. Unless otherwise noted, each client performs \(5\) local epochs per round before uploading its model update. We consider the data distribution heterogeneity as Client Non-IID and Cell Non-IID, where each client again holds data from 2 classes, and each ES's cell is constrained to at most 5 unique classes, resulting in significant class imbalance across cells. 

We compare our method with the following baselines:
\begin{itemize}
    \item \textbf{HFL\cite{a7}}: Hierarchical FL without overlapping regions.
    \item \textbf{FedMES\cite{a17}}: OCs train local models based on the aggregated multiple ES models.
    \item \textbf{FL-EOCD~\cite{a18}}: After caching and training on aggregation of received ES models, each OC uploads a model that combines its local update with the cached ES models.
    \item \textbf{FedOC~\cite{a44}}: ESs directly forward aggregated models to neighbors via ROCs after local aggregation, without transmission time optimization.
\end{itemize}
\begin{table}[t]
\caption{Simulation Parameters}
% \vspace{-0.3cm}
\begin{center}
\small
\begin{tabular}{p{5.4cm}|p{2.7cm}}  % Adjust column width here
\hline
\textbf{Parameters} &    \textbf{Values} \\
\hline
Number of clients, \(K\) & \(60\)\\
\hline
Number of iterations, \(R\) & \(500\)\\
\hline
Number of local training epochs, \({U}_{k}(r)\) & \(5\)\\
\hline
Client's transmit power, \(p~\mathrm{(W)}\) & \(1\) \\
\hline
ES's transmit power, \(P~\mathrm{(W)}\) & \(5\)\\
\hline
Total channel bandwidth, \(B~\mathrm{(MHz)}\)  & \(50\)\\
\hline
Client's one-epoch update time (seconds) & MNIST: \([0.1, 0.2]\), CIFAR-10: \([1, 2]\) \\
\hline
Noise power spectral density, \({N}_{0}~\mathrm{(dBm/Hz)}\) & \(-174\)\\
\hline
The number of model parameters & MNIST: \(21840\), CIFAR-10: \(1.14\) million \\
\hline
Batch size & \(20\)\\
\hline
Initial learning rate, \(\eta\) & MNIST: \(0.01\), CIFAR-10: \(0.1\)\\
\hline 
Exponential decay factor of \(\eta\) in SGD & MNIST: \(0.995\), CIFAR-10: \(0.992\)\\
\hline
\end{tabular}
\label{tab1}
\end{center}
% \vspace{-0.3cm}
\end{table}

As shown in Table~\ref{tab:avg_clients}, we compare the average number of clients aggregated per cell between FedOC and our method. Under the same round time \(T_{\text{max}}\), our approach enables each cell to receive significantly more ES models, including those from distant cells,  without extra communication links. This confirms the effectiveness of our algorithm.

\begin{table}[t]
\centering
\caption{The average number of clients aggregated per cell.}
\label{tab:avg_clients}
\begin{tabular}{c|c|c|c}
\hline
\textbf{Dataset} & \textbf{Cells} & \textbf{FedOC} & \textbf{Ours} \\
\hline
\multirow{3}{*}{MNIST} 
& 3 cells & 47.67 & \textbf{53.00} \\
& 5 cells & 31.40 & \textbf{50.00} \\
& 6 cells & 26.76 & \textbf{50.00} \\
\hline
\multirow{3}{*}{CIFAR-10} 
& 3 cells & 46.67 & \textbf{55.33} \\
& 5 cells & 31.40 & \textbf{53.40} \\
& 6 cells & 27.79 & \textbf{34.35} \\
\hline
\end{tabular}
% \vspace{-0.5cm}
\end{table}
%第三张大图
\begin{figure*}[t]
\centering
\begin{subfigure}[t]{0.31\textwidth}
    \centering
    \includegraphics[width=\linewidth]{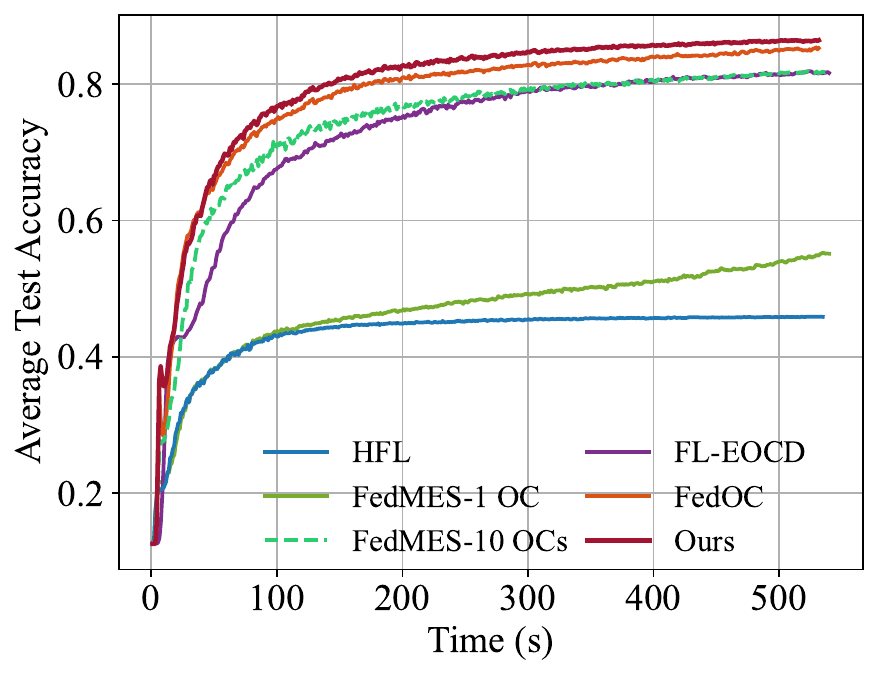}
    \caption{MNIST, 3cells}
    \label{MNIST, 3cells}
\end{subfigure}
\quad
\hfill
\begin{subfigure}[t]{0.31\textwidth}
    \centering
    \includegraphics[width=\linewidth]{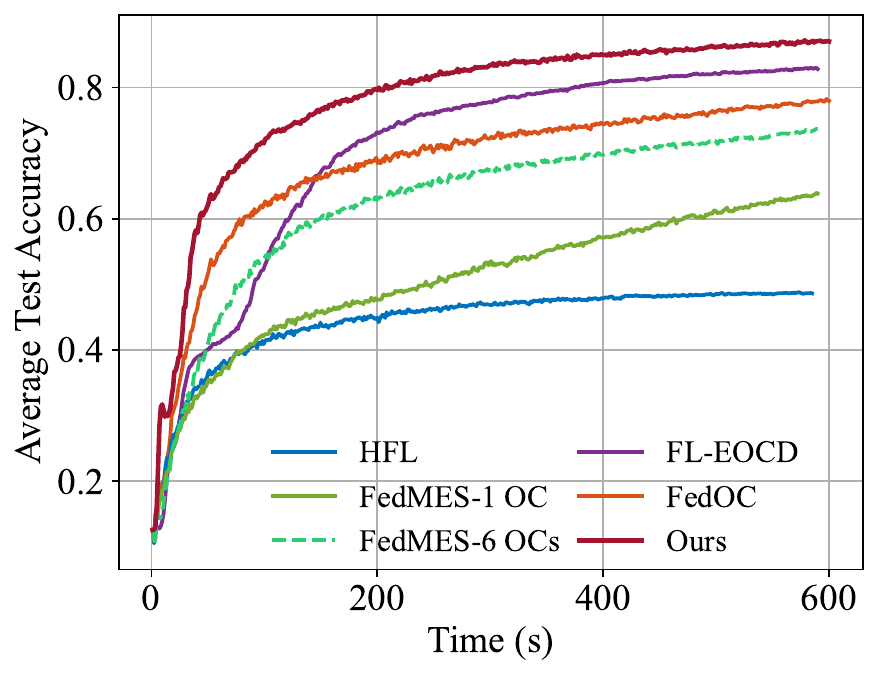}
    \caption{MNIST, 5cells}
    \label{MNIST, 5cells}
\end{subfigure}
\quad
\hfill
\begin{subfigure}[t]{0.31\textwidth}
    \centering
    \includegraphics[width=\linewidth]{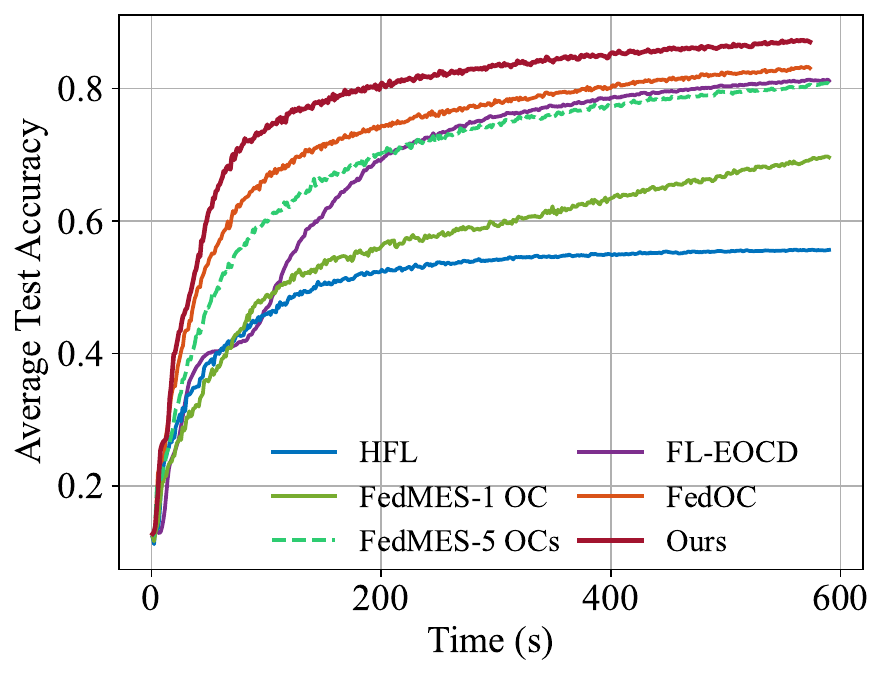}
    \caption{MNIST, 6cells}
    \label{MNIST, 6cells}
\end{subfigure}
% 第一个子图
\begin{subfigure}[t]{0.31\textwidth}
    \centering
    \includegraphics[width=\linewidth]{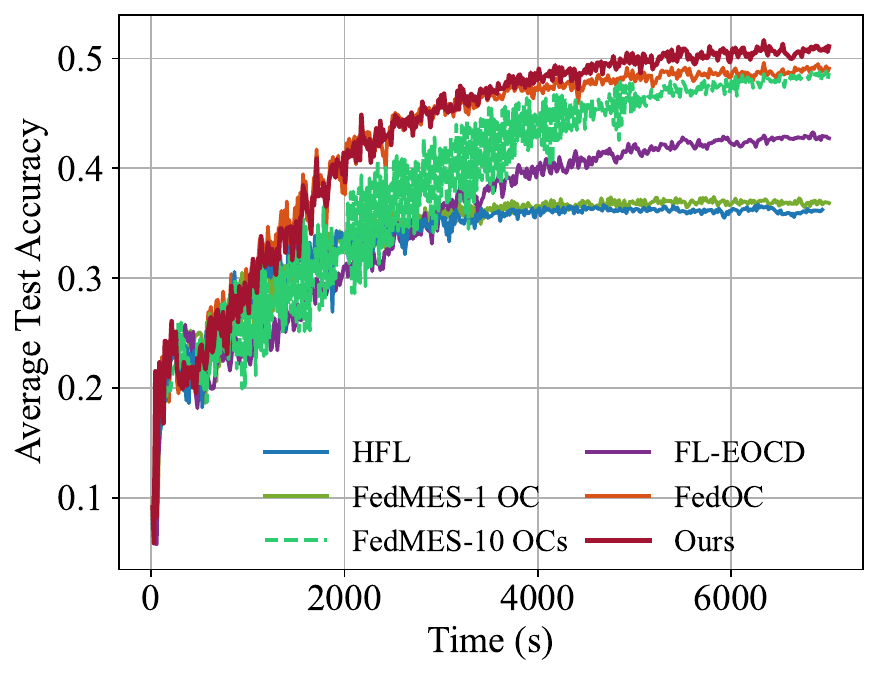}
    \caption{CIFAR-10, 3cells}
    \label{CIFAR-10, 3cells}
\end{subfigure}
\quad
\hfill
% 第二个子图
\begin{subfigure}[t]{0.31\textwidth}
    \centering
    \includegraphics[width=\linewidth]{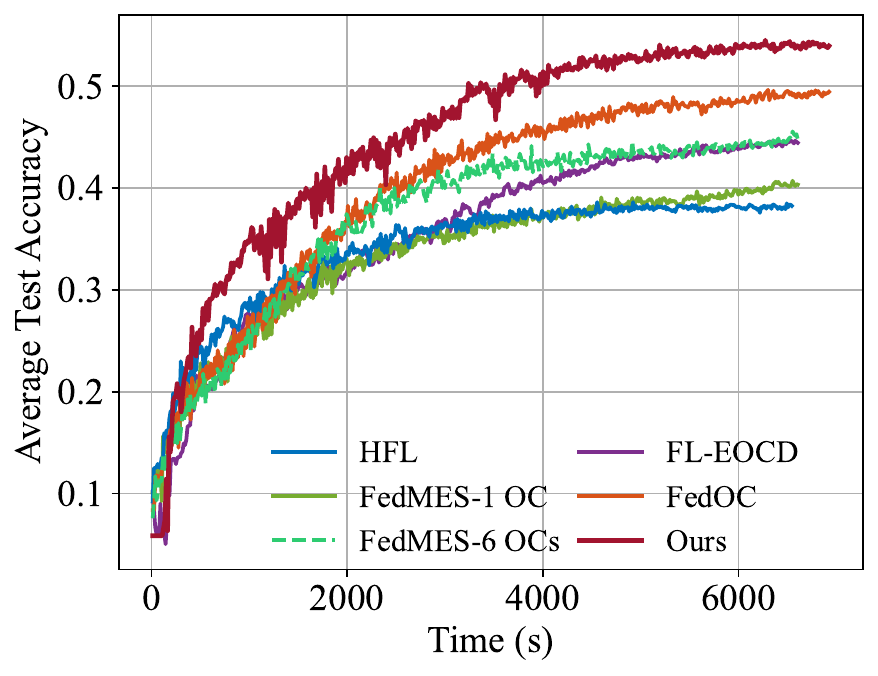}
    \caption{CIFAR-10, 5cells}
    \label{CIFAR-10, 5cells}
\end{subfigure}
\quad
\hfill
\begin{subfigure}[t]{0.31\textwidth}
    \centering
    \includegraphics[width=\linewidth]{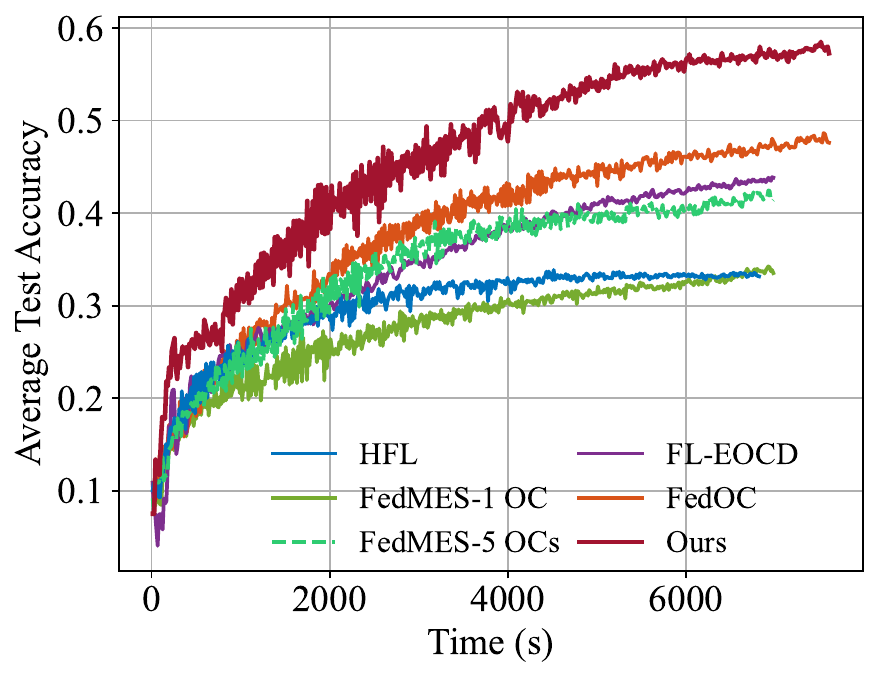}
    \caption{CIFAR-10, 6cells}
    \label{CIFAR-10, 6cells}
\end{subfigure}
% 图注
\caption{Average test accuracy versus training time}
\label{fig:IID-0-overlap1}
% \vspace{-0.5cm}
\end{figure*}

Fig.~\ref{fig:IID-0-overlap1} compares the accuracy of different benchmarks across various numbers of cells. HFL and FedMES with one OC only perform intra-cell aggregation in each round, thus their models plateau at a lower accuracies. When the OCs increase to \(\lfloor K / (2L) \rfloor\) clients per region, the accuracy of FedMES increases significantly, benefiting from its reliance on multiple OCs to aggregate ES models, which indirectly enhances the generalization of each ES model. However, relying solely on the limited aggregation within OCs provides only constrained gains in model generalization, especially when the number of cells increases. FL-EOCD fares better: by integrating other cells’ models, albeit with a delay, it achieves higher accuracy than HFL/FedMES in this scenario. In some MNIST cases (e.g., 5 cells) it can even outperform FedOC. However, its advantage diminishes on the more challenging CIFAR-10 task due to model staleness. In contrast, FedOC leverages real-time transmission between neighboring ESs via ROCs, achieving performance second only to our algorithm in most cases. Our algorithm builds upon FedOC by incorporating optimizations for the routing and transmission times of model exchanges, enabling models to be transmitted to more distant ESs without adding extra communication links or time. This optimization accelerates model fusion, leading to faster convergence and higher model accuracy.

% As shown in Fig.~???????????????????????????????????????\ref{fig:IID-0-overlap10},  FedOC alleviates this issue by introducing inter-ES model exchange via the ROC, and thereby indirectly obtaining model knowledge from other cells. On top of this, FedOC allows OCs to begin training as soon as a single ES model is received, effectively removing waiting delays. However, when the number of cells increases, although the model accuracy remains at a relatively high level, simply transmitting models from neighboring ESs and utilizing latency-aware OCs during training introduces oscillations (e.g.,~???????????????????????????????????????\ref{MNIST, 5cells-1}) in the accuracy curve. By maximizing the model transmission depth for each cell based on FedOC, our approach allows each cell to receive more models within the same time, effectively reducing oscillations in the accuracy curve and accelerating model convergence.

\section{Conclusion}
In this paper, we presented a cloud-free multi-server FL framework leveraging overlapping clients as relays to enable multi-hop model dissemination among ESs without introducing additional communication links. We derived a new convergence bound of loss error of our algorithm, which characterizes the relationship between each ES model propagation depth among all ESs and the convergence error. Building upon this, we formulate an optimization problem maximizing the propagation range of edge models under latency constraints by jointly scheduling inter-server route decisions and transmission times. To solve this problem, we proposed a conflict-graph-based local search algorithm that models path dependencies and transmission conflicts as a maximum-weight independent set problem. By combining greedy initialization and iterative local refinement, it efficiently derives near-optimal relay schedules with low computational complexity. Experimental results demonstrated that the proposed framework consistently outperforms existing schemes, achieving faster convergence and higher accuracy. The gains become more pronounced as the number of cells increases, confirming that optimized relay scheduling enables deeper dissemination of models across multiple ESs and thereby accelerates convergence.

{\appendix[Proof of the Theorem 1]
\begin{proof}

Notice that \(\bm{w}_{r,E}^{(l-1,l)}\) in (3)  may contain not only the locally aggregated model of cell \(l-1\) but also models relayed from other cells through multi-hop transmissions. For instance, the model of cell \(i\) can be forwarded along the path \(\mathcal{P}_{i \rightarrow (l-1)} = \{\text{ES } i, \text{ROC } b_{i,i+1}, \text{ES } (i+1), \dots, \text{ROC } b_{l-2,l-1}, \text{ES } (l-1)\}\), where each relay node aggregates the received model with its own model before forwarding, provided that the timing constraint is satisfied. We denote \(\bm{p}_{r}^{(f_l)} = [p_{r}^{(1,l)}, \ldots, p_{r}^{(L,l)}]\), where \(p_{r}^{(i,l)} \in \{0,1\}\) indicates whether cell \(i\)’s model participates in ES \(l\)’s aggregation. Based on~\eqref{equ3}, we have:

\begin{align} 
\label{equ4}
\bm{{w}}_{r+1}^{(f_l)} 
&= \frac{1}{N_{r,E}^{(f_l)}} \bigg[\sum_{i=1}^{l-1} p_{r}^{(i,l)}\left( \tilde{N}_r^{(f_i)} \bm{\tilde{w}}_{r,E}^{(f_i)} + n^{(b_{i,i+1})} \bm{w}_{r,E}^{(b_{i,i+1})}\right) \notag\\
&\quad + p_{r}^{(l,l)} \tilde{N}_{r}^{(f_l)} \bm{\tilde{w}}_{r,E}^{(f_l)} \notag\\
&\quad + \sum_{i=l+1}^{L} p_{r}^{(i,l)}\left( \tilde{N}_r^{(f_i)} \bm{\tilde{w}}_{r,E}^{(f_i)}+ n^{(b_{i-1,i})} \bm{w}_{r,E}^{(b_{i-1,i})} \right) \bigg] \notag\\
&= \sum_{i=1}^{L}\frac{p_{r}^{(i,l)}\hat{N}_{r}^{(f_i)} \bm{\hat{w}}_{r,E}^{(f_i)}}{\sum_{i=1}^{L} p_{r}^{(i,l)}\hat{N}_{r}^{(f_i)}},
\end{align}
where \(\bm{\hat{w}}_{r,E}^{(f_i)}\) is denoted as

\begin{align}
\label{equ5}
\bm{\hat{w}}_{r,E}^{(f_i)} = \sum_{k\in \mathcal{\hat{K}}_{r}^{(f_i)}}\frac{n^{(k)}}{\hat{N}_{r}^{(f_i)}} \bm{w}_{r,E}^{(k)},
\end{align}
where \(\hat{N}_{r}^{(f_i)} = \sum_{k\in \mathcal{\hat{K}}_{r}^{(f_i)}} n^{(k)}\), and the client set \(\hat{\mathcal{K}}_{r}^{(f_i)}\) satisfies:

\begin{align}
\label{equ6}
\mathcal{\hat{K}}_{r}^{(f_i)} = 
\begin{cases} 
\mathcal{S}_{i} \cup b_{i,i+1}, & 1 \leq i < l \\
\mathcal{S}_{i}, & i = l \\
\mathcal{S}_{i} \cup b_{i-1,i}, & l+1 \leq i \leq L. 
\end{cases}
\end{align}

From~\eqref{equ6}, the set \(\mathcal{\hat{K}}_{r}^{(f_i)}\) is not only associated with cell \(i\) but also depends on cell \(l\), which complicates the analysis. Since this difference only arises from the assignment of a single ROC and does not affect the overall client data of each cell, we approximate by assigning each ROC to the left cell in the mathematical formulation. That is,
\begin{align}
\label{equ19}
\mathcal{\hat{K}}_{r}^{(f_i)} = \mathcal{S}_{i} \cup b_{i,i+1}, \quad 1 \leq i < L-1.
\end{align}

Under these definitions, each ROC is mathematically associated with a unique cell. As shown in~\eqref{equ4} the edge model of each ES \(l\) is the weighted sum of all participating ES models.
% Let \(\hat{K}_{r}^{(f_l)}\) represent the participating number of clients of cell \(l\), i.e., \(\hat{K}_{r}^{(f_l)} = |\mathcal{\hat{K}}_{r}^{(f_l)}|\).

Since the population loss \(\ell (\bm{w})\) satisfies \(\ell(\boldsymbol{w})=\mathbb{E}_{\boldsymbol{x}, y \sim P}\left[\sum_{i=1}^C \mathbbm{1}_{y=i} \log f_i(\boldsymbol{x}, \boldsymbol{w})\right]=\sum_{i=1}^C P_{y=i} \mathbb{E}_{\boldsymbol{x} \mid y=i}\left[\log f_i(\boldsymbol{x}, \boldsymbol{w})\right]\), the local update in~\eqref{equ1} can be written as

\begin{align} 
\label{equ20}
\bm{w}_{r,e+1}^{(k)} = \bm{w}_{r,e}^{(k)} - \eta_{r,e} \sum_{i=1}^C P_{y=i}^{(k)} \nabla_{\bm{w}} \mathbb{E}_{\boldsymbol{x} \mid y=i}\left[\log f_i(\boldsymbol{x}, \bm{w}_{r,e}^{(k)})\right],
\end{align}
where \(P_{y=i}^{(k)}\) represents the data proportion of class \(i\) on clients \(k\). 

% For simplicity, we assume OCs participate in training via a fixed assignment strategy that does not change across rounds in FedOC, and we mainly explore the convergence performance of the neighboring edge aggregation strategy.

To derive the convergence bound of the loss function, we introduce a cell-centralized SGD algorithm, where each ES collects the datasets of its associated clients, performs local
model training, and then uploads the updates to the CS for aggregation. Let \(\bm{w}_{r}^{(c)}\) denote the global model of \(r\)-th round, and \(\bm{w}_{r,e+1}^{(c_l)}\) represent the \(e\)-th update in the \(r\)-th round under the centralized setting. Then cell-centralized SGD performs the following update

\begin{align} 
\label{equ21}
\bm{w}_{r,e+1}^{(c_l)} &= \bm{w}_{r,e}^{(c_l)} - \eta_{r,e}  \sum_{i=1}^C P_{y=i}^{(c_l)} \nabla_{\bm{w}} \mathbb{E}_{\boldsymbol{x} \mid y=i}\left[\log f_i(\boldsymbol{x}, \bm{w}_{r,e}^{(c_l)})\right],
\end{align}

where \(e = 0,1,\ldots,E-1\), and \(P_{y=i}^{(c_l)}\) represents the data distribution of participating clients in cell \(l\). After \(E\) times update, the aggregated model for next round \(r+1\) in centralized settings can be written as

\begin{align} 
\label{equ22}
\bm{w}_{r+1}^{(c)} = \sum_{i=1}^{L}\frac{\hat{N}_{r}^{(f_i)} \bm{w}_{r,E}^{(c_i)}}{\sum_{i=1}^{L} \hat{N}_{r}^{(f_i)}}.
\end{align}

Combining \textbf{Assumption 1} with \(\sum_{i=1}^{C} P_{y=i}\lambda_{x|y=i} = \lambda\), 
we obtain that \(\ell(\bm{w})\) is \(\lambda\)-smooth, i.e., \(\|\nabla \ell(\bm{w}) - \nabla \ell(\bm{v})\| \leq \lambda \|\bm{w} - \bm{v}\|,~\forall~\bm{v}, \bm{w}.\) Then we have

\begin{align} 
\label{equ23}
\ell(\bm{w}_{R}^{(f_l)}) - \ell(\bm{w^{*}}) &\leq \frac{\lambda}{2} \left\| \bm{w}_{R}^{(f_l)} - \bm{w}^{*} \right\|\notag\\
& = \frac{\lambda}{2} \left\| \bm{w}_{R}^{(f_l)} - \bm{w}_{R}^{(c)} + \bm{w}_{R}^{(c)} - \bm{w}^{*} \right\| \notag\\
& \leq \frac{\lambda}{2} \underbrace{\left\|\bm{w}_{R}^{(f_l)} - \bm{w}_{R}^{(c)}\right\|}_{A_1} 
+  \frac{\lambda}{2} \underbrace{\left\| \bm{w}_{R}^{(c)} - \bm{w}^{*} \right\|}_{A_2},\notag\\
\end{align} 

We next focus on bounding \(A_1\). Based on the definitions of \(\bm{w}_{R}^{(f_l)} \text{ and } \bm{w}_{R}^{(c)}\) in~\eqref{equ4} and~\eqref{equ22}, we have

\begin{align}
\label{equ24}
A_1 &=  \bigg\|\bm{w}_{R}^{(f_l)} - \bm{w}_{R}^{(c)}\bigg\| \notag\\
&=  \bigg\| 
\sum_{j=1}^{L}  
\frac{p_{R-1}^{(j,l)}\hat{N}_{R-1}^{(f_j)} \bm{\hat{w}}_{R-1,E}^{(f_j)}}{\sum_{j=1}^{L} p_{R-1}^{(j,l)}\hat{N}_{R-1}^{(f_j)}} 
- \sum_{j=1}^{L} 
\frac{\hat{N}_{R-1}^{(f_j)} \bm{w}_{R-1,E}^{(c_j)}}{\sum_{j=1}^{L} \hat{N}_{R-1}^{(f_j)}} \bigg\|\notag\\
&=\bigg\| 
\sum_{j=1}^{L} 
\frac{\hat{N}_{R-1}^{(f_j)}}{\sum_{j=1}^{L} \hat{N}_{R-1}^{(f_j)}} 
\bigg( 
\frac{p_{R-1}^{(j,l)} \sum_{j=1}^{L} \hat{N}_{R-1}^{(f_j)}}{\sum_{j=1}^{L} p_{R-1}^{(j,l)} \hat{N}_{R-1}^{(f_j)}} 
\bm{\hat{w}}_{R-1,E}^{(f_j)} \notag\\
&\quad - \bm{\hat{w}}_{R-1,E}^{(f_j)} + \bm{\hat{w}}_{R-1,E}^{(f_j)} 
- \bm{w}_{R-1,E}^{(c_j)} \bigg) \bigg\|  \notag\\
&=\bigg\|
\sum_{j=1}^{L} 
\frac{\hat{N}_{R-1}^{(f_j)}}{\sum_{j=1}^{L} \hat{N}_{R-1}^{(f_j)}} 
\bigg[ 
\bigg( 
\frac{p_{R-1}^{(j,l)} 
\sum_{j=1}^{L} 
\hat{N}_{R-1}^{(f_j)}}{\sum_{j=1}^{L} p_{R-1}^{(j,l)} 
\hat{N}_{R-1}^{(f_j)}} - 1
\bigg) 
\bm{\hat{w}}_{R-1,E}^{(f_j)} \notag\\
&\quad + \left( 
\bm{\hat{w}}_{R-1,E}^{(f_j)} 
- \bm{w}_{R-1,E}^{(c_j)} 
\right)
\bigg] 
\bigg\|  \notag\\
& \leq  \sum_{j=1}^{L} 
\bigg| 
\frac{p_{R-1}^{(j,l)}\hat{N}_{R-1}^{(f_j)}}{\sum_{j=1}^{L} p_{R-1}^{(j,l)}\hat{N}_{R-1}^{(f_j)}} - \frac{\hat{N}_{R-1}^{(f_j)}}{\sum_{j=1}^{L} \hat{N}_{R-1}^{(f_j)}} 
\bigg| 
\left\| \bm{\hat{w}}_{R-1,E}^{(f_j)} \right\|\notag\\
&\quad + 
\underbrace{\bigg\| 
\sum_{j=1}^{L} 
\frac{\hat{N}_{R-1}^{(f_j)}}{\sum_{j=1}^{L} \hat{N}_{R-1}^{(f_j)}} 
\left(\bm{\hat{w}}_{R-1,E}^{(f_j)} - \bm{w}_{R-1,E}^{(c_j)} \right)
\bigg\|}_{B_1}.
\end{align}

We substitute~\eqref{equ5},~\eqref{equ21} into \(B_1\), and we have

\begin{align} 
\label{equ25}
& \quad B_1 \notag\\
&= \bigg\| 
\sum_{j=1}^{L} 
\frac{\hat{N}_{R-1}^{(f_j)}}{\sum_{j=1}^{L} \hat{N}_{R-1}^{(f_j)}} 
\left(\bm{\hat{w}}_{R-1,E}^{(f_j)} - \bm{w}_{R-1,E}^{(c_j)} \right)
\bigg\|\notag\\
&\quad \cdot \sum_{i=1}^C P_{y=i}^{(k)} \nabla_{\bm{w}} \mathbb{E}_{\boldsymbol{x} \mid y=i}\left[\log f_i(\boldsymbol{x}, \bm{w}_{R-1,E-1}^{(k)})\right] \bigg) - \bm{w}_{R-1,E-1}^{(c_j)}\notag\\
&\quad  + \eta_{R-1,E-1}  \sum_{i=1}^C P_{y=i}^{(c_j)} \nabla_{\bm{w}} \mathbb{E}_{\boldsymbol{x} \mid y=i}\left[\log f_i(\boldsymbol{x}, \bm{w}_{R-1,E-1}^{(c_j)})\right]
\bigg] \bigg\|\notag\\
& \stackrel{1}= \bigg\| 
\sum_{j=1}^{L} 
\frac{\hat{N}_{R-1}^{(f_j)}}{\sum_{j=1}^{L} \hat{N}_{R-1}^{(f_j)}} 
\sum_{k=1}^{\hat{K}_{r}^{(f_j)}} 
\frac{n^{(k)}}{\hat{N}_{R-1}^{(f_j)}} 
\left[ \left( \bm{w}_{R-1,E-1}^{(k)} - \bm{w}_{R-1,E-1}^{(c_j)} \right) \right. \notag\\
& \quad + \left. \eta_{R-1,E-1} \sum_{i=1}^C P_{y=i}^{(k)} 
\left( \nabla_{\bm{w}} \mathbb{E}_{\boldsymbol{x} \mid y=i} \left[\log f_i(\boldsymbol{x}, \bm{w}_{R-1,E-1}^{(k)}) \right] \right. \right. \notag\\
& \quad \left. \left. - \nabla_{\bm{w}} \mathbb{E}_{\boldsymbol{x} \mid y=i} \left[\log f_i(\boldsymbol{x}, \bm{w}_{R-1,E-1}^{(c_j)}) \right] \right)
\right] 
\bigg\| \notag\\
& \stackrel{2}\leq 
\sum_{j=1}^{L} 
\frac{\hat{N}_{R-1}^{(f_j)}}{\sum_{j=1}^{L} \hat{N}_{R-1}^{(f_j)}} 
\sum_{k=1}^{\hat{K}_{r}^{(f_j)}} 
\frac{n^{(k)}}{\hat{N}_{R-1}^{(f_j)}} \cdot \notag\\
& \quad \left(
1 + \eta_{R-1,E-1} \sum_{i=1}^{C} P_{y=i}^{(k)} \lambda_{\bm{x} \mid y=i}
\right) \notag\\
& \quad \cdot \left\|
\bm{w}_{R-1,E-1}^{(k)} - \bm{w}_{R-1,E-1}^{(c_j)}
\right\|
\end{align} 
where equality 1 holds because for each class \(i \in \{1, 2, \ldots, C\}\), \( P_{y=1}^{(c_j)} = \sum_{k=1}^{\hat{K}_{r}^{(f_j)}} 
\frac{n^{(k)}}{\hat{N}_{R}^{(f_j)}} P_{y=1}^{(k)}\), i.e., the data distribution over all the clients in cell \(j\) is the same as the data distribution over the whole cell population. Inequality 2 holds because of the \textbf{Assumption 1}. Inspired by~\cite{a34}, we give the bounding of \(\left\| \bm{w}_{R-1,E-1}^{(k)} - \bm{w}_{R-1,E-1}^{(c_j)} \right\|\) for client \(k \in \hat{\mathcal{K}}_{r}^{(c_j)}\) in cell \(j\):

\begin{align} 
\label{equ26}
&\bigg\| \boldsymbol{w}_{R-1,E-1}^{(k)} - \boldsymbol{w}_{R-1,E-1}^{(c_j)} \bigg\| \notag\\
\stackrel{3}=& \bigg\| \left(\boldsymbol{w}_{R-1,E-2}^{(k)} -  \boldsymbol{w}_{R-1,E-2}^{(c_j)} \right)\notag\\
& + \bigg( \eta_{R-1,E-2} \sum_{i=1}^C P_{y=i}^{(k)} \nabla_{\boldsymbol{w}} \mathbb{E}_{\boldsymbol{x} \mid y=i} \left[ \log f_i(\boldsymbol{x}, \boldsymbol{w}_{R-1,E-2}^{(c_j)}) \right]  \notag\\
& - \eta_{R-1,E-2} \sum_{i=1}^C P_{y=i}^{(k)} \nabla_{\boldsymbol{w}} \mathbb{E}_{\boldsymbol{x} \mid y=i} \left[ \log f_i(\boldsymbol{x}, \boldsymbol{w}_{R-1,E-2}^{(k)}) \right] \bigg) \bigg\| \notag\\
\leq& \bigg\| \boldsymbol{w}_{R-1,E-2}^{(k)} - \boldsymbol{w}_{R-1,E-2}^{(c_j)} \bigg\| \notag\\ 
& + \eta_{R-1,E-2} \bigg\| \sum_{i=1}^C P_{y=i}^{(k)} \nabla_{\boldsymbol{w}} \mathbb{E}_{\boldsymbol{x} \mid y=i} \left[ \log f_i(\boldsymbol{x}, \boldsymbol{w}_{R-1,E-2}^{(k)}) \right] \notag\\
& - \sum_{i=1}^C P_{y=i}^{(k)} \nabla_{\boldsymbol{w}} \mathbb{E}_{\boldsymbol{x} \mid y=i} \left[ \log f_i(\boldsymbol{x}, \boldsymbol{w}_{R-1,E-2}^{(c_j)}) \right] \notag\\
& + \sum_{i=1}^C P_{y=i}^{(k)} \nabla_{\boldsymbol{w}} \mathbb{E}_{\boldsymbol{x} \mid y=i} \left[ \log f_i(\boldsymbol{x}, \boldsymbol{w}_{R-1,E-2}^{(c_j)}) \right] \notag\\
& - \sum_{i=1}^C P_{y=i}^{(c_j)} \nabla_{\boldsymbol{w}} \mathbb{E}_{\boldsymbol{x} \mid y=i} \left[ \log f_i(\boldsymbol{x}, \boldsymbol{w}_{R-1,E-2}^{(c_j)}) \right] \bigg\| \notag\\
\leq & \bigg\| \boldsymbol{w}_{R-1,E-2}^{(k)} - \boldsymbol{w}_{R-1,E-2}^{(c_j)} \bigg\| \notag\\
& + \eta_{R-1,E-2} \bigg[ \bigg\| \sum_{i=1}^C P_{y=i}^{(k)} \bigg( \nabla_{\boldsymbol{w}} \mathbb{E}_{\boldsymbol{x} \mid y=i} \left[ \log f_i(\boldsymbol{x}, \boldsymbol{w}_{R-1,E-2}^{(k)}) \right] \notag\\
& - \nabla_{\boldsymbol{w}} \mathbb{E}_{\boldsymbol{x} \mid y=i} \left[ \log f_i(\boldsymbol{x}, \boldsymbol{w}_{R-1,E-2}^{(c_j)}) \right] \bigg) \bigg\| + \notag\\
& \bigg\| \sum_{i=1}^C \left( P_{y=i}^{(k)} - P_{y=i}^{(c_j)} \right)  \nabla_{\boldsymbol{w}} \mathbb{E}_{\boldsymbol{x} \mid y=i} \left[ \log f_i(\boldsymbol{x}, \boldsymbol{w}_{R-1,E-2}^{(c_j)}) \right] \bigg\| \bigg]\notag\\
\stackrel{4}\leq& \bigg(1 + \eta_{R-1,E-2} \sum_{i=1}^C P_{y=i}^{(k)} \lambda_{x\mid y=i}\bigg) \bigg\| \boldsymbol{w}_{R-1,E-2}^{(k)} - \boldsymbol{w}_{R-1,E-2}^{(c_j)} \bigg\| \notag\\
& + \eta_{R-1,E-2} \sum_{i=1}^C \bigg| P_{y=i}^{(k)} - P_{y=i}^{(c_j)} \bigg|  g_{\max} (\boldsymbol{w}_{R-1,E-2}^{(c_j)}) \notag\\
\stackrel{5}=& a_{R-1,E-2}^{(k)} \bigg\| \boldsymbol{w}_{R-1,E-2}^{(k)} - \boldsymbol{w}_{R-1,E-2}^{(c_j)} \bigg\| \notag\\
& +  \eta_{R-1,E-2} \sum_{i=1}^C \bigg| P_{y=i}^{(k)} - P_{y=i}^{(c_j)} \bigg| g_{\max} (\boldsymbol{w}_{R-1,E-2}^{(c_j)}) 
\end{align}

where equality 3 holds because \eqref{equ20} and \eqref{equ21}. Inequality 4 holds because the \textbf{Assumption 1} and we denote \(g_{\max }\left(\boldsymbol{w}_{r, q}^{(c_j)}\right)=\max _{i=1}^C\left\|\nabla_{\boldsymbol{w}} \mathbb{E}_{\boldsymbol{x} \mid y=i} \log f_i\left(\boldsymbol{x}, \boldsymbol{w}_{r, q}^{(c_j)}\right)\right\|\) for \(r \in \{1,2,\ldots,R\} \text{ and } q \in \{0,1,\ldots,E\}\). equality 5 holds because we denote \(a_{r,q}^{(k)}=1+\eta_{r,q} \sum_{i=1}^C P_{y=i}^{(k)} \lambda_{\boldsymbol{x} \mid y=i}\) for \(r \in \{1,2,\ldots,R\} \text{ and } q \in \{0,1,\ldots,E\}\). From \eqref{equ26} we deduce the relation between \( \left\| \boldsymbol{w}_{R-1,E-1}^{(k)} - \boldsymbol{w}_{R-1,E-1}^{(c_j)} \right\| \text{ and } \left\| \boldsymbol{w}_{R-1,E-2}^{(k)} - \boldsymbol{w}_{R-1,E-2}^{(c_j)} \right\| \). 

By repeatedly using this relation, we can recursively obtain:

\begin{align}
\label{equ27}
&\bigg\| \boldsymbol{w}_{R-1,E-1}^{(k)} - \boldsymbol{w}_{R-1,E-1}^{(c_j)} \bigg\| \notag\\
\leq& a_{R-1,E-2}^{(k)} \bigg\| \boldsymbol{w}_{R-1,E-2}^{(k)} - \boldsymbol{w}_{R-1,E-2}^{(c_j)} \bigg\| \notag\\
& + \sum_{i=1}^C \bigg| P_{y=i}^{(k)} - P_{y=i}^{(c_j)} \bigg| \eta_{R-1,E-2} g_{\max} (\boldsymbol{w}_{R-1,E-2}^{(c_j)}) \notag\\
\leq& \left( \prod_{q=0}^{E-2} a_{R-1,q}^{(k)}\right)
\left\|\boldsymbol{w}_{R-1, 0}^{(k)}-\boldsymbol{w}_{R-1, 0}^{(c_j)}\right\| 
+ \sum_{i=1}^C\left|P_{y=i}^{(k)}-P_{y=i}^{(c_j)}\right|\notag\\
&\quad \cdot \sum_{q=0}^{E-2} \eta_{R-1,q} 
\left(\prod_{t=q+1}^{E-2} a_{R-1,t}^{(k)}\right) 
g_{\max}\!\left(\boldsymbol{w}_{R-1,q}^{(c_j)}\right) \notag\\
\leq& \left( \prod_{q=0}^{E-2} a_{R-1,q}^{(k)}\right)
\left\|\boldsymbol{w}_{R-1}^{(f_j)}-\boldsymbol{w}_{R-1}^{(c)}\right\|
+ \sum_{i=1}^C\left|P_{y=i}^{(k)}-P_{y=i}^{(c_j)}\right|\notag\\
&\quad \cdot \sum_{q=0}^{E-2} \eta_{R-1,q} 
\left(\prod_{t=q+1}^{E-2} a_{R-1,t}^{(k)}\right) 
g_{\max}\!\left(\boldsymbol{w}_{R-1,q}^{(c_j)}\right),
% \leq& \left( \prod_{e=0}^{E-2} a_{R-1,E-2-e}^{(k)}\right)
% \left\|\boldsymbol{w}_{R-1, 0}^{(k)}-\boldsymbol{w}_{R-1, 0}^{(c_j)}\right\| + \sum_{i=1}^C\left|P_{y=i}^{(k)}-P_{y=i}^{(c_j)}\right|\notag\\
% &\quad \cdot \sum_{d=0}^{E-2}  \eta_{R-1,E-2-d}  \left(\prod_{e=0}^{d-1}a_{R-1,E-2-e}^{(k)}\right) g_{\max }\left(\boldsymbol{w}_{R-1,E-2-d}^{(c_j)}\right) \notag\\
% \leq& \left( \prod_{e=0}^{E-2} a_{R-1,E-2-e}^{(k)}\right)
% \left|\boldsymbol{w}_{R-1}^{(f_j)}-\boldsymbol{w}_{R-1}^{(c)}\right\| + \sum_{i=1}^C\left|P_{y=i}^{(k)}-P_{y=i}^{(c_j)}\right|\notag\\
% &\quad \cdot \sum_{d=0}^{E-2}  \eta_{R-1,E-2-d}  \left(\prod_{e=0}^{d-1}a_{R-1,E-2-e}^{(k)}\right) g_{\max }\left(\boldsymbol{w}_{R-1,E-2-d}^{(c_j)}\right) 
\end{align}
where we assume \(\prod_{a}^{b} a_{r,e}^{(k)} = 1,~\forall b < a,~r~\text{and}~e\). Institute \eqref{equ27} into \eqref{equ25}, we have

% \begin{align} 
% \label{equ28}
% B_1
% &\leq \sum_{j=1}^{L} 
% \frac{\hat{N}_{R-1}^{(f_j)}}{\sum_{j=1}^{L} \hat{N}_{R-1}^{(f_j)}} 
% \sum_{k=1}^{\hat{K}_{r}^{(f_j)}} 
% \frac{n^{(k)}}{\hat{N}_{R-1}^{(f_j)}} 
% \Bigg[ \left( \prod_{e=0}^{E-1} a_{R-1,E-1-e}^{(k)}\right)\notag\\ 
% &\quad \cdot \left\|\boldsymbol{w}_{R-2, E}^{(f_j)}-\boldsymbol{w}_{R-1-1, E}^{(c)}\right\| + \sum_{i=1}^C\left|P_{y=i}^{(k)}-P_{y=i}^{(c_j)}\right|\notag\\
% &\quad \cdot \sum_{d=0}^{E-2}  \eta_{R-1,E-2-d}  \left(\prod_{e=0}^{d} a_{R-1,E-1-e}^{(k)}\right) g_{\max }\left(\boldsymbol{w}_{R-1,E-2-d}^{(c_j)}\right) \Bigg]\notag\\
% &= \sum_{j=1}^{L} 
% \bigg( D_R-1^{(j)}  \left\|\boldsymbol{w}_{R-1-1, E}^{(f_j)}-\boldsymbol{w}_{R-1-1, E}^{(c)}\right\| + G_{R-1}^{(j)} \bigg), \notag\\
% \end{align}
% where  \( D_R-1^{(j)} = \frac{\hat{N}_{R-1}^{(f_j)}}{\sum_{j=1}^{L} \hat{N}_{R-1}^{(f_j)}}  \sum_{k=1}^{\hat{K}_{r}^{(f_j)}} 
% \frac{n^{(k)}}{\hat{N}_{R-1}^{(f_j)}}  \left( \prod_{e=0}^{E-1} a_{R-1,E-1-e}^{(k)}\right) \) and \(G_{R-1}^{(j)} = \frac{\hat{N}_{R-1}^{(f_j)}}{\sum_{j=1}^{L} \hat{N}_{R-1}^{(f_j)}}  \sum_{k=1}^{\hat{K}_{r}^{(f_j)}} 
% \frac{n^{(k)}}{\hat{N}_{R-1}^{(f_j)}} \left(\sum_{i=1}^C\left|P_{y=i}^{(k)}-P_{y=i}^{(c_j)}\right| \right) \notag\\ \cdot \sum_{d=0}^{E-2}  \eta_{R-1,E-2-d} \left(\prod_{e=0}^{d} a_{R-1,E-1-e}^{(k)}\right) g_{\max }\left(\boldsymbol{w}_{R-1,E-2-d}^{(c_j)}\right)\). 
\begin{align} 
\label{equ28}
B_1
&\leq \sum_{j=1}^{L} 
\frac{\hat{N}_{R-1}^{(f_j)}}{\sum_{j=1}^{L} \hat{N}_{R-1}^{(f_j)}} 
\sum_{k=1}^{\hat{K}_{r}^{(f_j)}} 
\frac{n^{(k)}}{\hat{N}_{R-1}^{(f_j)}} 
\Bigg[ \left( \prod_{e=0}^{E-1} a_{R-1,e}^{(k)}\right)\notag\\ 
&\quad \cdot \left\|\boldsymbol{w}_{R-1}^{(f_j)}-\boldsymbol{w}_{R-1}^{(c)}\right\| 
+ \sum_{i=1}^C\left|P_{y=i}^{(k)}-P_{y=i}^{(c_j)}\right|\notag\\
&\quad \cdot \sum_{d=0}^{E-2}  \eta_{R-1,d}  
\left(\prod_{e=d+1}^{E-1} a_{R-1,e}^{(k)}\right) 
g_{\max }\left(\boldsymbol{w}_{R-1,d}^{(c_j)}\right) \Bigg]\notag\\
&= \sum_{j=1}^{L} 
\bigg( D_{R-1}^{(j)}  
\left\|\boldsymbol{w}_{R-1}^{(f_j)}-\boldsymbol{w}_{R-1}^{(c)}\right\| 
+ G_{R-1}^{(j)} \bigg), \notag\\
\end{align}
where \(D_{R-1}^{(j)} 
= \frac{\hat{N}_{R-1}^{(f_j)}}{\sum_{j=1}^{L} \hat{N}_{R-1}^{(f_j)}}  
\sum_{k=1}^{\hat{K}_{r}^{(f_j)}} 
\frac{n^{(k)}}{\hat{N}_{R-1}^{(f_j)}}  
\left( \prod_{e=0}^{E-1} a_{R-1,e}^{(k)}\right)\),
and \(G_{R-1}^{(j)} 
= \frac{\hat{N}_{R-1}^{(f_j)}}{\sum_{j=1}^{L} \hat{N}_{R-1}^{(f_j)}}  
\sum_{k=1}^{\hat{K}_{r}^{(f_j)}} 
\frac{n^{(k)}}{\hat{N}_{R-1}^{(f_j)}} 
\left(\sum_{i=1}^C\bigl|P_{y=i}^{(k)}-P_{y=i}^{(c_j)}\bigr|\right) \\
\sum_{e=0}^{E-2} \eta_{R-1,d} 
\left(\prod_{d=e+1}^{E-1} a_{R-1,d}^{(k)}\right) 
g_{\max }\!\left(\boldsymbol{w}_{R-1,e}^{(c_j)}\right) \). Plug~\eqref{equ28} into~\eqref{equ24}, we have

\begin{align}
\label{equ29}
A_1 &=  \bigg\|\bm{w}_{R}^{(f_l)} - \bm{w}_{R}^{(c)}\bigg\| \notag\\ 
& \leq \sum_{j=1}^{L} 
\bigg| 
\frac{p_{R-1}^{(j,l)}\hat{N}_{R-1}^{(f_j)}}{\sum_{j=1}^{L} p_{R-1}^{(j,l)}\hat{N}_{R-1}^{(f_j)}} - \frac{\hat{N}_{R-1}^{(f_j)}}{\sum_{j=1}^{L} \hat{N}_{R-1}^{(f_j)}} 
\bigg| 
\left\| \bm{\hat{w}}_{R-1}^{(f_j)} \right\|\notag\\
& \quad + \sum_{j=1}^{L} 
\bigg( D_{R-1}^{(j)}  \left\|\boldsymbol{w}_{R-1}^{(f_j)}-\boldsymbol{w}_{R-1}^{(c)}\right\| + G_{R-1}^{(j)} \bigg) \notag\\
& \stackrel{6}= \sum_{j=1}^{L} 
D_{R-1}^{(j)}  \left\|\boldsymbol{w}_{R-1}^{(f_j)}-\boldsymbol{w}_{R-1}^{(c)}\right\| + F_{R-1}^{(l)} + G_{R-1} \notag\\
% & \leq \sum_{j=1}^{L} D_{R-1}^{(j)} \sum_{z=1}^{L} D_{R-2}^{(z)}  \left\|\boldsymbol{w}_{R-2}^{(f_z)}-\boldsymbol{w}_{R-2}^{(c)}\right\| \notag\\
% & \quad + \sum_{j=1}^{L} D_{R-1}^{(j)} \left( F_{R-2}^{(j)} + G_{R-2}\right) + F_{R-1}^{(l)} + G_{R-1} \notag\\ 
% & \leq \left( \prod_{r=1}^{R-1} D_{r} \right) \left[ \sum_{j=1}^{L} D_{0}^{j} \left\|\boldsymbol{w}_{0}^{(f_j)}-\boldsymbol{w}_{0}^{(c)}\right\| \right] \notag\\
% &\quad + \underbrace{ \sum_{r=0}^{R-1} \left(\prod_{q=r+1}^{R-1} D_{q}\right) G_{r}}_{\epsilon_{\text{intra}}} \notag\\
% &\quad + \underbrace{ \left( F_{R-1}^{(l)} + \sum_{r=0}^{R-2} \left(\prod_{q=r+2}^{R-1} D_{q} \right) \sum_{j=1}^{L} D_{r+1}^{j} F_{r}^{(j)} \right) }_{\epsilon_{\text{inter}}}  \notag\\
% &\stackrel{6}= \underbrace{ \sum_{r=0}^{R-1} \prod_{q=r+1}^{R-1} D_{q} G_{r}}_{\epsilon^{\text{intra}}} \notag\\
% &\quad + \underbrace{ \left( F_{R-1}^{(l)} + \sum_{r=0}^{R-2} \prod_{q=r+2}^{R-1} D_{q} \sum_{j=1}^{L} D_{r+1}^{j} F_{r}^{(j)} \right) }_{\epsilon^{\text{inter}}}  \notag\\
\end{align}
where equality 6 holds because we denote 

\begin{align}
\label{equ30}
F_{R-1}^{(l)} = \sum_{j=1}^{L}\left| \frac{p_{R-1}^{(j,l)}\hat{N}_{R-1}^{(f_j)}}{\sum_{j=1}^{L} p_{R-1}^{(j,l)}\hat{N}_{R-1}^{(f_j)}} - \frac{\hat{N}_{R-1}^{(f_j)}}{\sum_{j=1}^{L} \hat{N}_{R-1}^{(f_j)}} 
\right| \left\| \bm{\hat{w}}_{R-1,E}^{(f_j)} \right\|,
\end{align}
and \(G_{R-1} = \sum_{j=1}^{L} G_{R-1}^{(j)}\). 
% Equality 6 holds since we set the initial models of these two algorithm as the same at \((0)\)-th round, i.e., \(\left\|\boldsymbol{w}_{0}^{(f_j)}-\boldsymbol{w}_{0}^{(c)}\right\|=0\), where \(\kappa\) is a constant, represents the cloud aggregation . 
% If we set \(\eta_{r,e} = \frac{\beta}{RE}\) for all \(r \text{ and } e\) then \(\epsilon_{1}\) can be \(\mathcal{O}(\frac{1}{RE})\).

In the following, we first derive a theoretical upper bound for \(G_{R-1}\). We denote
\begin{align}
\label{equ31}
G_{R-1}^{(j)}
&= \sum_{e=0}^{E-2} \eta_{R-1,e}\,\beta_{R-1,e}^{(j)},
\end{align}
and
\begin{align}
\label{equ32}
\beta_{R-1,e}^{(j)}=& \frac{\hat{N}_{R-1}^{(f_j)}}{\sum_{j=1}^{L} \hat{N}_{R-1}^{(f_j)}}  
\sum_{k=1}^{\hat{K}_{r}^{(f_j)}} 
\frac{n^{(k)}}{\hat{N}_{R-1}^{(f_j)}} 
\left(\sum_{i=1}^C\bigl|P_{y=i}^{(k)}-P_{y=i}^{(c_j)}\bigr|\right) \notag\\
&\cdot\left(\prod_{d=e+1}^{E-1} a_{R-1,d}^{(k)}\right) 
g_{\max }\!\left(\boldsymbol{w}_{R-1,e}^{(c_j)}\right). 
\end{align}

We assume that there exist constants \(g_{\max}\) and \(\beta_{r}^{(j)}\) such that \(g_{\max}\big(\bm{w}_{r,e}^{(c_j)}\big) \le g_{\max}\) and \(\beta_{r,e}^{(j)} \le \beta_{r}^{(j)}\), \(\forall r,~e,~j\). Moreover, we denote \(\sum_{j=1}^{L} \beta_{r}^{(j)} = \beta_{r}\). Since \(a_{r,q}^{(k)}=1+\eta_{r,q} \sum_{i=1}^C P_{y=i}^{(k)} \lambda_{\boldsymbol{x} \mid y=i} >1,~\forall r,~q\), based on \eqref{equ25}, we further assume \(\beta_{\text{min}} \leq \beta_{r} \leq \beta_{\text{max}}\). Then we have

\begin{align} 
\label{equ33}
G_{R-1}= \sum_{e=0}^{E-2}\eta_{R-1,e}\sum_{j=1}^{L}\beta_{R-1,e}^{(j)}  
\leq\sum_{e=0}^{E-2} \eta_{R-1,e} \beta_{R-1} \leq   \frac{\beta_{R-1}}{R},
\end{align}

Finally, we will give the upper bound of \(A2\). \(A2\) is the model divergence between the cell-centralized SGD algorithm and \(\bm{w}^{*}\). If we see each cell as a client, the cell-centralized SGD is the standard FedAvg. Under the assumption that \(\bm{w}^{*}\) is the final model of the global centralized SGD, where data in all clients are collected in the CS to train a global model. Under the assumptions of \textbf{Theorem 1}, \cite{a34} gives the bound of \(A_2\) as 
% \vspace{-0.1cm}
\begin{align}
\label{equ34}
A_2 &= \left\| \bm{w}_{R}^{(c)} - \bm{w}^{*} \right\| \notag\\
&\leq \frac{\sum_{j=1}^{3} \hat{N}_{R-1}^{(f_j)} H^{(j)}\left( \sum_{i=1}^{C} \left\| P_{y=i}^{(c_j)} - P_{y=i}^{(c)} \right\|\right) }{NR(E-1)}. 
\end{align}
where \(N\) is the total data volume of all clients and \(H^{(j)}= \sum_{e=0}^{E-1} \left(\prod_{d=e+1}^{E-1}a_{R-1,d}^{(c_j)} \right) g_{\text{max}}( \bm{w}_{R-1,e}^{(c)})\).

Substituting~\eqref{equ29},~\eqref{equ33} and \eqref{equ34} into \eqref{equ23}, we have

\begin{align} 
\label{equ35}
& \ell(\bm{w}_{R}^{(f_l)}) - \ell(\bm{w^{*}}) \notag\\
\leq & \frac{\lambda}{2} \Bigg[ \sum_{j=1}^{L} 
D_{R-1}^{(j)}  \left\|\boldsymbol{w}_{R-1}^{(f_j)}-\boldsymbol{w}_{R-1}^{(c)}\right\| + \underbrace{\frac{\beta_{R-1}}{R}}_{\epsilon^{\text{inter}}} \notag\\
&+ \underbrace{\frac{\sum_{j=1}^{3} \hat{N}_{R-1}^{(f_j)} H^{(j)}\left( \sum_{i=1}^{C} \left\| P_{y=i}^{(c_j)} - P_{y=i}^{(c)} \right\|\right) }{NR(E-1)} + F_{R-1}^{(l)}}_{\epsilon^{\text{inter}}} \Bigg]
\end{align} 
which completes the proof.
\end{proof}

\bibliographystyle{IEEEtran}  % 使用 IEEE 标准格式
\bibliography{references}      % 指定 BibTeX 文件（文件名不带 .bib）

% \vfill

\end{document}